\providecommand{\tabularnewline}{\\}
\theoremstyle{remark}
\theoremstyle{plain}
\newtheorem{thm}{\protect\theoremname}
\theoremstyle{plain}
\newtheorem{lem}{\protect\lemmaname}
\theoremstyle{remark}
\newtheorem{claim}{\protect\claimname}
\theoremstyle{plain}
\newtheorem{proposition}{\protect\propositionname}
\definecolor{green}{RGB}{0, 180, 0}
\definecolor{brown}{RGB}{140, 40, 40}
\providecommand{\claimname}{Claim}
\providecommand{\lemmaname}{Lemma}
\providecommand{\remarkname}{Remark}
\providecommand{\theoremname}{Theorem}
\providecommand{\lemmaname}{Lemma}
\providecommand{\propositionname}{Proposition}
\begin{document}
\title{Evolution, Heritable Risk, and Skewness Loving\thanks{We thank Erol Akcay, Gilad Bavly, Ben Golub, Aviad Heifetz, Laurent
Lehmann, John McNamara, Jonathan Newton, Debraj Ray, Roberto Robatto,
Larry Samuelson, Bal$\grave{\textrm{a}}$zs Szentes, the editor, two anonymous referees, and LEG2019 conference participants
for helpful comments. We thank Ron Peretz for his kind help in extending
Theorem 1. We thank Renana Heller for writing the simulation used
in Section \ref{sec:Numeric-Analysis-of}.}}
\author{Yuval Heller\thanks{Bar-Ilan University, Department of Economics. (email: \protect\protect\protect\protect\protect\href{mailto:yuval.heller\%5C\%5C\%5C\%5C\%5C\%5C\%5C\%5C\%5C\%5C\%5C\%5C\%5C\%5C\%5C\%5C\%5C\%5C\%5C\%5C\%5C\%5C\%5C\%5C\%5C\%5C\%5C\%5C\%5C\%5C\%5C\%40biu.ac.il}{yuval.heller@biu.ac.il},
homepage: \protect\protect\protect\protect\protect\href{https://sites.google.com/site/yuval26/}{https://sites.google.com/site/yuval26/}).
Heller is grateful to the European Research Council for its financial
support (ERC starting grant \#677057).}~ and Arthur Robson\thanks{Simon Fraser University, Department of Economics (email: \protect\protect\protect\protect\protect\protect\protect\protect\protect\protect\href{mailto:\%5C\%5C\%5C\%5C\%5C\%5C\%5C\%5C\%5C\%5C\%5C\%5C\%5C\%5C\%5C\%5C\%5C\%5C\%5C\%5C\%5C\%5C\%5C\%5C\%5C\%5C\%5C\%5C\%5C\%5C\%5C\%20robson\%5C\%5C\%5C\%5C\%5C\%5C\%5C\%5C\%5C\%5C\%5C\%5C\%5C\%5C\%5C\%5C\%5C\%5C\%5C\%5C\%5C\%5C\%5C\%5C\%5C\%5C\%5C\%5C\%5C\%5C\%5C\%40sfu.ca}{ robson@sfu.ca},
homepage: \protect\protect\protect\protect\protect\protect\protect\protect\protect\protect\href{https://sites.google.com/view/arthurrobson/home/}{https://sites.google.com/view/arthurrobson/home/}).
Robson thanks the {Social Sciences and Humanities Research Council
of Canada} for support. }}
\maketitle
\begin{abstract}
Our understanding of risk preferences can be sharpened by considering
their evolutionary basis. The existing literature has focused on two
sources of risk: idiosyncratic risk and aggregate risk. We introduce
a new source of risk, heritable risk, in which there is a positive
correlation between the fitness of a newborn agent and the
fitness of her parent. Heritable risk was plausibly common in
our evolutionary past and it leads to a strictly higher growth rate
than the other sources of risk. We show that the presence
of heritable risk in the evolutionary past may explain the tendency
of people to exhibit skewness loving today.

\textbf{JEL Classification}: D81, D91. \textbf{Keywords}: evolution
of preferences, risk attitude, risk interdependence, long-run growth
rate, fertility rate.\\
Final pre-print of a manuscript accepted for publication in \textit{Theoretical Economics}.

\end{abstract}

\section{Introduction \label{sec:Introduction}}

Our understanding of risk preferences can be sharpened by considering
their evolutionary basis (see \citealp{robson2011evolutionary}, for a
survey).
This claim was advanced in the economics literature by \cite{robson1996biological}, for example, who presented a model in which each agent lives a single period and faces a choice between lotteries
over the number of offspring. (See also related models in \citealp{lewontin1969population,mcnamara1995implicit}.)
Some of the feasible lotteries involve aggregate risk
(when all agents obtain the same realization). \cite{robson1996biological}
showed that idiosyncratic risk (independent
across individuals) induces a higher long-run growth
rate {(henceforth ``growth rate'')} than aggregate risk, and as a result natural selection should
induce agents to be more risk averse with respect to aggregate risk.\footnote{See \cite{heller2014overconfidence} for a discussion of why this
might explain people's tendency to overestimate the accuracy of their
private information.}

This result has been put into an intriguing new light by \cite{robatto2017biological}
who reconsider the model in continuous time. In such a framework it
is appealing to formulate both consumption and the production of offspring
as \emph{rates}. Once this is done aggregate risk becomes equivalent
to idiosyncratic risk as long as fertility and mortality are age-independent. (See \citealp{Robson-Samuelson-2019}, and Section \ref{sec:Discussion} of this paper.)

The way in which idiosyncratic risk has been modeled in the previous
literature captures well coin flips concerning fertility that only
affect a particular individual. However, it is compelling that, in
the evolutionary past, there were plausibly many cases in which
the ``outcome of the flip'' persisted from parents to offspring.
In this paper we capture this persistence by introducing a new source
of risk, heritable risk, which is basically idiosyncratic risk, but
allows a positive correlation between the fitness of a newborn agent
and the fitness of her parent.

Heritable risk in this sense must have been common in the evolutionary past of human beings.
Such risk is induced if the agent's fitness is heritable due to imitation of the parent's behavior or
genetic inheritance. For example, a foraging
technique in prehistoric hunter-gatherer societies would be inherited if an individual copied her parent's technique. Alternatively, risk is heritable if the choice an individual makes is controlled genetically, and this gene is passed down from mother to daughter. The key properties are just: (1) there is a positive correlation between the fitness of an agent
and that of her parent, and (2) by contrast, there is little
correlation between the fitness of two randomly chosen agents in the
population.

We show that this heritable risk yields a strictly higher growth rate
than the other sources of risk. We derive this result in \citeauthor{robatto2017biological}'s (\citeyear{robatto2017biological}) setup, as it is more striking to see the advantage of heritable
risk in a setup in which all other sources of risk are equivalent. It
is relatively simple to show that heritable risk is also advantageous
in other setups considered in the literature.

\paragraph{Highlights of the model}

Consider a simple setup in which agents occasionally redraw a lottery
over their consumption rate, and the realized consumption determines
the fertility rate through a concave increasing function $\psi$.
Specifically, assume that the lottery can yield a high consumption
rate ($c_{h}$, inducing a fertility rate $r_{h}=\psi\left(c_{h}\right)$)
with probability $q_{h}$ or a low consumption rate ($c_{l}$, inducing a fertility rate
$r_{l}=\psi\left(c_{l}\right)$) with probability $q_{l}=1-q_{h}$.
Each agent redraws her realized level of fertility at an annual rate
of $\lambda$. For simplicity assume that there is no mortality. Our
crucial departure from the existing literature is to assume that a
newborn agent inherits the realized fertility rate of her parent and
the values remain the same until either the parent or the offspring
redraws their fertility rate.

\paragraph{Key result}
{Theorem \ref{thm:main-result} shows that the growth rate $x^{*}$ 
induced by heritable risk 
is both (1) strictly higher than the lottery's expected
fertility rate
$\mu\equiv q_{\ell}\cdot r_{\ell}+q_{h}\cdot r_{h}$, but $x^{*}\rightarrow\mu$ as 
$\lambda\rightarrow \infty$,
and (2) strictly below the highest realization $r_{h}$, but $x^{*}\rightarrow r_h$ as $\lambda\rightarrow0$. 
To see the intuition behind (2), 
consider the case where $\lambda>0$ is small.}
The effect of the high realization of the heritable fertility rate gets
compounded over time since parents with high fertility rates beget offspring
with high fertility rates. Agents with high fertility rates therefore
form an increasing fraction of the population over time, causing the
overall growth rate to increase, and in the long run to  
{be close to}
$r_{h}$.

Our result has two main implications: (1) heritable risk induces a
higher growth rate than either aggregate risk or idiosyncratic risk
(both of which induce a growth rate that is equal to the lottery's
expectation $\mu$), and (2) this difference in the growth rates is
especially large when dealing with positively skewed lotteries (since the growth rate can be made close to $r_{h}$ in a way that is independent of the probability
$q_{h}$).

One can interpret our result as follows. The long-run impact of risk
interdependence depends on the ``direction'' of the interdependence
(vertical or horizontal). The form of risk we introduce induces correlation
between an agent's outcome and her offspring's outcome. {This ``vertical
correlation'' is helpful to the growth rate, as it allows successful
families to have fast exponential growth. 
By contrast, this risk does
not involve ``horizontal correlation'' of risk between agents of the same
cohort, which would be harmful to the growth rate.}
The insight that vertical correlation increases the growth rate, but horizontal correlation decreases it, may be applicable in other domains of economics and finance.

\paragraph{Risk attitude}

We assume that individuals in our evolutionary past had different
types, and that the agent's type determines her risk attitude---in particular, how the
agent chooses between a risky consumption option and a safe one. An agent is likely to have the same
type as her parent due to genetic inheritance. Occasionally, new types
are introduced into the population following a genetic mutation.
Observe that the population share of agents of the type that induces the highest long-run growth rate will grow, until, in the long run, almost all agents are of this type.

In Section \ref{sec:Risk-Attitude} we show that our key result implies
that the type with the highest growth rate is (1) risk
averse with respect to most lotteries over consumption (due to the
concavity of the function $\psi$ relating consumption and fertility),
but (2) risk loving with respect to sufficiently positively skewed
lotteries. Since biological types evolve slowly, it is likely that this risk attitude persists in modern times, even though the birth rate may no longer be increasing in the consumption
rate. This finding fits the stylized empirical fact that people,
although being in general risk averse, are skewness loving. That is, people like lotteries involving a small probability
of winning a high prize. (See, for example, \citealp{golec1998bettors,garrett1999gamblers}.)

\paragraph{Structure}

The rest of the paper is organized as follows. Section \ref{sec:Motivating-Example} informally presents the essence
of our key result. The model is presented in Section \ref{sec:Model}.
Section \ref{sec:key-result} formally presents our {key} result. In Section
\ref{sec:Risk-Attitude} we discuss the implications of our result
for attitudes to risk. Section \ref{sec:Dynasties-and-Structured}
{extends our baseline model by allowing dependency between redraws of heritable risk within each of a number of dynasties, with independence across dynasties, which seems plausible in various applications. We show that this extension does not affect our results for infinite populations. By contrast, this structure can affect the growth rate of finite populations, which we investigate by numerical simulations. We discuss several additional related references in Section \ref{sec:Discussion} and conclude in Section \ref{sec:Conclusions}.}

\section{Informal Treatment of Key Result\label{sec:Motivating-Example}}

The following example conveys the gist of our key result. Consider
three populations, each having a random fertility rate (which is independent
of the agent's age) with the same marginal distribution. Each population has a probability $q_{\ell}$
of having a low fertility rate of $r_{\ell}$, and a probability  $q_{h}=1-q_{\ell}$ of having
a high fertility rate of $r_{h}$. For notational compactness, we
now take as implicit the dependence of fertility on consumption rates
$c_{i},i=\ell,h$. For simplicity, we focus on fertility, so that
there is no mortality. The source of risk is independent across populations.

In Population 1 risk is idiosyncratic; that is, the fertility rate
of each agent is independent of the fertility rate of all other agents in the populations and, in particular, of her parent's fertility rate. Applying the law of large numbers, the number
of agents in Population 1 at time $t$ is equal to $N(t)=e^{\left(q_{\ell}\cdot r_{\ell}+q_{h}r_{h}\right)\cdot t}$,
where $N(0)=1$, and the annual growth rate is $\frac{1}{t}\cdot\ln N(t)=q_{\ell}\cdot r_{\ell}+q_{h}r_{h}\equiv\mu$.

In Population 2 risk is aggregate. There are two states: $\ell$ and
$h$. In state $\ell$, all agents have fertility rate $r_{\ell}$,
and in state $h$, all agents have fertility rate $r_{h}$. There
is a continuous probability rate $\lambda$ that the state is
redrawn. If it is, the fertility rate is $r_{\ell}$ with probability $q_{\ell}$
and $r_{h}$ with probability $q_{h}$. What is the {(long-run)} growth rate of
the population exposed to this aggregate risk? If $N(t)$ is the population
at time $t$, and $N(0)=1$, it follows that
\[
\frac{\ln N(t)}{t}=\frac{r_{\ell}\cdot({\text{time in state }}\ell)+r_{h}\cdot(\text{time in state }h)}{t}\longrightarrow q_{\ell}\cdot r_{\ell}+q_{h}\cdot r_{h}=\mu,
\]
as ${t\rightarrow\infty}$, given the evident ergodicity of the process. Thus, as
shown in \cite{robatto2017biological}, both idiosyncratic risk
and aggregate risk induce the same growth rate.

We introduce a novel form of risk in Population 3, called heritable risk. Each
agent redraws her heritable birth rate independently of all other
agents at a rate $\lambda$, and at each redraw the agent
gets a fertility rate $r_{\ell}$ or $r_{h}$ with probability $q_{\ell}$
or $q_{h}=1-q_{\ell}$, respectively
{(independently of all other events). The previous literature makes
an implicit assumption that each offspring is given a fresh draw, and so all offspring are equivalent and evolutionary success entails simply counting these undifferentiated offspring. By contrast,
suppose that each offspring inherits the 
{\emph{realized} fertility rate} of the parent. Since offspring are now differentiated, the value of these 
offspring varies with type and simply counting them is inadequate. Our key result shows that in this case the growth rate is
strictly
higher than the expectation $\mu$, and indeed converges to $r_{h}$ as $\lambda\rightarrow 0$.}

To understand the gist of the argument, consider a simplified alternative
setup in which redraws arrive deterministically and in synchrony every
$\tau$ periods,
which is comparable to an arrival rate of $\lambda=1/\tau$.
As before, the redrawn values of different agents are independent.
On each draw, a share $q_{\ell}$ of the agents get $r_{\ell}$ and
the remaining agents get $r_{h}$. If the initial population is of
size $1$, then, after a time $k\cdot\tau$, the population is $N(k\cdot\tau)=(q_{\ell}\cdot e^{r_{\ell}\cdot\tau}+q_{h}\cdot e^{r_{h}\cdot\tau})^{k},$
so that
\[
\frac{1}{k\cdot\tau}\ln N(k\cdot\tau)=\frac{1}{\tau}\cdot\ln(q_{\ell}\cdot e^{r_{\ell\cdot}\tau}+q_{h}\cdot e^{r_{h}\cdot\tau})\equiv\bar{g}(\lambda).
\]
It follows that the growth rate of the population, $\bar{g}\left(\lambda\right)$,
is decreasing in $\lambda$, $\bar{g}(\lambda)\rightarrow r_{h}$
if $\lambda\rightarrow0$ ($\tau\rightarrow\infty$), and $\bar{g}(\lambda)\rightarrow\mu\equiv q_{\ell}\cdot r_{\ell}+q_{h}\cdot r_{h}$,
if $\lambda\rightarrow\infty$ ($\tau\rightarrow0$). This, in particular,
implies that the growth rate is strictly higher than the lottery's
expectation $\mu$, which is the growth rate induced by either idiosyncratic
risk or aggregate risk with the same marginal distribution.

{Recall that $x^{*}$ is the growth rate in the general model. What Theorem \ref{thm:main-result} shows, more precisely, is that $x^{*}>\mu$ and $x^{*}> r_h-\lambda$.\footnote{The simplifying assumption that the intervals between redraws are deterministic (rather than stochastic intervals induced by a Poisson process) decreases the growth rate, and thus the above example
might yield a lower growth rate than the lower bound $r_h-\lambda$ of Theorem 1.} This latter result implies that $x^{*}\rightarrow r_h$ as $\lambda\rightarrow 0$, given $x^{*}<r_h$. Figure \ref{fig:illus} illustrates our result for the values $r_h=5\%$, $r_l=0\%$, and $q_h=10\%$; i.e., for a binary lottery that yields a high annual birth rate of $5\%$ with probability $10\%$ and a zero birth rate with probability $90\%$. When risk is either idiosyncratic or aggregate the (long-run) growth rate is equal to the expected birth rate $\mu=0.5\%$. Theorem \ref{thm:main-result} (and the informal argument above) shows that when the risk is heritable the growth rate is strictly larger than $\mu$. The figure also draws the exact growth rate induced by heritable risk according to the explicit formula presented in Claim \ref{cla:binary} (in Appendix \ref{sec:Explicit-Solution-for}) for binary lotteries. As can be seen from the figure, when the redraw rate $\lambda$ is very small (resp., large) with respect to  $r_h$, then the growth rate is slightly above $r_h-\lambda$ (resp., $\mu$).}
\begin{figure}
\begin{centering}
\includegraphics[scale=0.44]{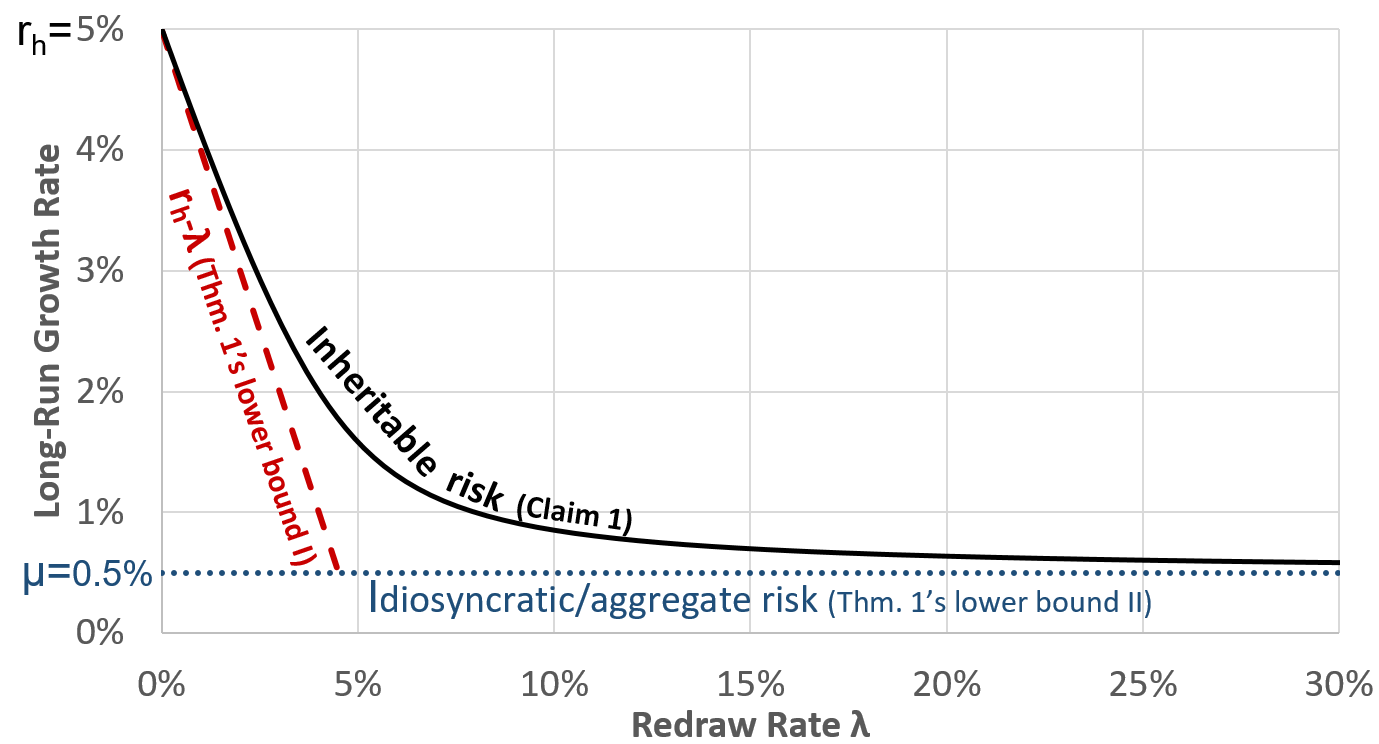}
\par\end{centering}
\caption{\label{fig:illus}{Long-Run Growth Rate for a Binary Lottery ( $r_h=5\%$, $r_l=0\%$, and $q_h=10\%$).}}
\end{figure}

\section{Model\label{sec:Model}}

Consider a continuum population of an initial mass one. Time is continuous,
indexed by $t\in\mathbb{R}^{+}$.
To simplify matters, we assume that reproduction is asexual.  The growth process depends on the parameters
$\left(\delta,\left(X,q_{x},\lambda_{x}\right),\left(Y,q_{y},\lambda_{y}\right),\left(Z,q_{z},\lambda_{z}\right)\right)$,
as described  below.

In what follows, we first present an intuitive description of Poisson processes on the individual level that incorporate the probability of each agent dying, giving birth, and changing her birth rate 
{(parts (i) below)}. We then specify the corresponding exact evolution of the large population that is assumed in our model 
{(parts (ii)} below).\footnote{The formalization of the intuitive claim that the idiosyncratic Poisson process for the birth rate of an individual in a large population implies the mean is exactly attained raises various technical difficulties. See \cite{duffie2012exact} (and the citations
therein) for details.}

\begin{enumerate}
\item (i) We suppose intuitively that each agent experiences a constant Poisson {death rate} $\delta\geq0$ that is \emph{independent} of all other random variables and, in particular, of all components of the birth rates.

(ii) We assume precisely that, in each infinitesimal period of time between $t$ and $t+dt$, a fraction $\delta\cdot dt$ of the population dies,
where this fraction is uniform across all components of the birth rate.
\end{enumerate}
Each individual $i$  at time $t$ has a birth rate $\boldsymbol{b}^{i}\left(t\right)=\boldsymbol{x}^{i}\left(t\right)+\boldsymbol{y}^{i}\left(t\right)+\boldsymbol{z}\left(t\right)$ with three components. These components are constructed as follows:
\begin{enumerate}
\item[2.]  (i) The random variable $\boldsymbol{x}^{i}\left(t\right)\geq0$ is the heritable component of the birth rate. A newborn agent obtains the heritable birth rate of
her parent. We assume that the random variable $\boldsymbol{x}^{i}$
has a finite support $X=supp\left(\boldsymbol{x}\right)=\left\{ x_{1},...,x_{n}\right\} $,
where $x_{1}<...<x_{n}$ 
{and $n\geq2$}. The function $q_x:X\rightarrow (0,1),~ \sum_{x\in X} q_x =1,$ assigns a probability to each $x\in X.$ 
Intuitively, {in each infinitesimal period of time $dt$} each agent has a probability of $\lambda_{x}\cdot dt$ 
of redrawing her heritable birth rate
{(where $\lambda_x>0$)}, and
these redrawing events are independent of all other events.

(ii) The precise assumptions on the heritable component are as follows. Suppose that $w(t)$ is the total population at time $t$ and $w_k(t)$ is the {mass} of agents who
are endowed with heritable component $x_k$. Then the rate of increase of $w_k(t)$ is
\begin{equation}
\frac{dw_{k}(t)}{dt}=w_{k}(t)x_{k}-\lambda_{x}w_{k}(t)+\lambda_{x}w(t)q_{x}(x_{k})-\delta w_{k}(t).\label{eq:w_k_t}
\end{equation}

The first term expresses the increase in $w_k(t)$ due to offspring who are endowed with 
$x_k$. { This captures the  key characteristic of heritable risk that all offspring are endowed with the same component $x_k$ as their parent. Since this term is independent of $\lambda_x$, it will follow that $w_k(t)$ grows at rate $x_k-\delta$ when $\lambda_x\rightarrow 0$.} The second term expresses the loss from $w_k(t)$ of those agents who redraw. The third term represents the increase due to all agents from $w(t)$ (including those from $w_k(t)$) who redraw and obtain $x_k$. The final term represents the loss from $w_k(t)$ due to death.

\item [3.] (i) The random variable $\boldsymbol{y}^{i}(t)\geq0$ is the idiosyncratic
component of the birth rate. The idiosyncratic birth rate of an agent is independent of all other random variables governing the birth rates in the population. The random variable $\boldsymbol{y}^{i}$ has a finite support
$Y=supp\left(\boldsymbol{y}\right)=\left\{ y_{1},...,y_{n_{y}}\right\} $. 
The function $q_y:Y\rightarrow (0,1],~ \sum_{y\in Y} q_y =1$, assigns a probability to each $y\in Y.$ 
In {each infinitesimal period of time $dt$} each agent has a probability of $\lambda_{y}\cdot dt$ of redrawing her idiosyncratic birth rate, and
these redrawing events are independent of all other events.

(ii)
The precise assumption is that the idiosyncratic component within any group of agents always reflects the
distribution $q_y$. That is, the share of agents with idiosyncratic outcome $y_{\ell}$, for example, in the group of agents with heritable outcome $x_k$ is exactly equal to $q_y(y_{\ell})$ for any time
$t \geq 0$. This implies that the idiosyncratic component in any group of agents with any heritable component $x_k$ is exactly equal to the expectation $\mu_y$.

\item[4.] The aggregate component of the birth rate $\boldsymbol{z}\left(t\right)\geq0$ can be handled more straightforwardly since all agents in the population share this aggregate
rate.
We assume that the random variable $\boldsymbol{z}^{i}$ has a finite support $Z=supp\left(\boldsymbol{z}\left(t\right)\right)=\left\{ z_{1},...,z_{n_{z}}\right\} $. The function
$q_z:Z\rightarrow (0,1],~ \sum_{z\in Z} q_z =1,$ assigns a probability to each $z\in Z.$ 
At time $t=0$ the aggregate birth rate $\boldsymbol{z}\left(0\right)$ is randomly determined according to the distribution $q_{z}$.
In each infinitesimal period of time between $t$ and $t+dt$ a new random value of the aggregate birth rate is drawn independently
(according to $q_{z}$) with a probability of $\lambda_{z}\cdot dt$, where $\lambda_{z}>0$. This aggregate birth rate applies to all individuals
in the entire population equally.

\end{enumerate}

\section{Key Result\label{sec:key-result}}

Let ${w}\left(t\right)$ denote the mass of the population
at time $t$. We normalize ${w}\left(0\right)=1$. We say
that the growth process of ${w}\left(t\right)$ given by
$\left(\delta,\left(X,q_{x},\lambda_{x}\right),\left(Y,q_{y},\lambda_{y}\right),\left(Z,q_{z},\lambda_{z}\right)\right)$
has an \emph{equivalent (long-run) growth rate} $g\in\mathbb{R}$
if and only if
\[
lim_{t\rightarrow\infty}\frac{\ln{w}\left(t\right)}{t}=g,{\text{ almost surely.}}
\]

Let $\mu_{x}=\sum_{k}x_{k}\cdot q_{x}\left(x_{k}\right)$ (resp.,
$\mu_{y}=\sum_{k}y_{k}\cdot q_{y}\left(y_{k}\right)$, $\mu_{z}=\sum_{k}z_{k}\cdot q_{z}\left(z_{k}\right)$)
be the expectation of the heritable (resp., idiosyncratic, aggregate)
birth rate.
We show that the equivalent growth rate is the sum of four components:
$g=f\left(X,q_{x},\lambda_{x}\right)+\mu_{y}+\mu_{z}-\delta.$ The
results on the idiosyncratic and aggregate components of the overall
growth rate accord with the existing literature (\citealp{robatto2017biological}),
namely, these components are equal to $\mu_{y}$ and $\mu_{z}$, respectively.
The novel part of the result is that the heritable birth component
satisfies
\[
f\left(X,q_{x},\lambda_{x}\right)\in\left(\max\left(\mu_{x},x_{n}-\lambda_{x}\right),x_{n}\right).
\]
That is, the heritable birth component is always larger than $\mu_{x}$,
and it cannot be more than $\lambda_{x}$ away from the highest realization
$x_{n}$. The first property shows that the desirability of heritable risk
is that it induces a higher growth rate than comparable aggregate
or idiosyncratic risk. The second property shows that the highest
realization of the heritable risk has a substantial influence, regardless
of how low is its probability. That is, a lottery in which $x_{n}>\lambda_{x}$
induces a growth rate of at least $x_{n}-\lambda_{x}$ regardless
how small $q_{x}(x_n)$ and $\mu_{x}$ might be.

The intuition is that the distribution of the heritable birth rate
in the population converges to a distribution $p\in\Delta\left(X\right)$
that first-order stochastically dominates $q_{x}$. This is because,
at each point in time, agents with a high heritable birth rate tend
to have more offspring and these offspring share the parent's heritable
birth rate. Hence, in a steady state, the share of agents with a high
heritable birth rate is strictly higher than $q$. Higher values of
$\lambda$ reduce this effect, as the offspring redraw more rapidly a new value
for their heritable birth rate (according to $q_{x}$).

The final claim is that $f\left(X,q_{x},\lambda_{x}\right)$
increases following a mean-preserving spread of the heritable birth
rate. The intuition is that a mean preserving spread increases the
high $x_{k}$'s while decreasing the low $x_{k}$'s, and there is
a net gain from this due to the over-representation of high $x_{k}$-agents
in the steady-state distribution.
\begin{thm}
\label{thm:main-result}Let $\left(\delta,\left(X,q_{x},\lambda_{x}\right),\left(Y,q_{y},\lambda_{y}\right),\left(Z,q_{z},\lambda_{z}\right)\right)$
be a growth process. Then its equivalent growth rate is equal to $g=f\left(X,q_{x},\lambda_{x}\right)+\mu_{y}+\mu_{z}-\delta,$
where, setting $f\left(X,q_{x},\lambda_{x}\right)=x^{*}$ for compactness,
$x^{*}$ is the unique positive solution of
\[
x^{*}=\lambda_{x}\sum_{k=1}^{n}\frac{q_{k}\cdot x_{k}}{\lambda_{x}+x^{*}-x_{k}}\in\left(\max\left(\mu_{x},x_{n}-\lambda_{x}\right),x_{n}\right),
\]

with $q_{k}\equiv q_x\left(x_{k}\right)$ for each $k\in\left\{ 1,..,n\right\} $. {It follows that $x^{*}\rightarrow \sum_{k=1}^n q_k x_k=\mu_x$ as $\lambda_x\rightarrow\infty$ and $x^{*}\rightarrow x_n$ as  $\lambda_x\rightarrow 0.$}

Moreover, if $\left(X',q'_{x}\right)$ is a mean-preserving spread
of $\left(X,q_{x}\right)$, then $f\left(X',q_{x'},\lambda_{x}\right)>f\left(X,q_{x},\lambda_{x}\right)$.
\end{thm}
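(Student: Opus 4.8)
The plan is to separate the four sources of risk and reduce the whole problem to a single linear eigenvalue problem for the heritable component. First I would dispatch the three ``classical'' contributions, each of which decouples from the composition of the population by heritable type: since death is uniform across types it merely shifts every per-capita growth rate by $-\delta$; since the idiosyncratic share is held exactly at $q_y$ inside each heritable class (assumption 3(ii)), it contributes exactly $\mu_y$; and since the aggregate state $\boldsymbol z(t)$ multiplies the whole population equally and is an ergodic finite-state jump process, $\tfrac1t\int_0^t \boldsymbol z(\tau)\,d\tau\to\mu_z$ almost surely (this is the only place the ``almost surely'' is needed). It then remains to analyze the deterministic linear system \eqref{eq:w_k_t}; as the death term $-\delta w_k$ is common to all coordinates, I set $\delta=0$ and show the per-capita growth rate of the reduced system equals $x^{*}$.

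Next I would write the heritable subsystem as $\dot w = M w$ with $M = D + \lambda_x\, q\,\mathbf 1^{\top}$, where $D=\mathrm{diag}(x_k-\lambda_x)$, $q=(q_1,\dots,q_n)^{\top}$, and $\mathbf 1=(1,\dots,1)^{\top}$ --- a diagonal-plus-rank-one matrix. Because the off-diagonal entries are nonnegative (i.e.\ $M$ is Metzler), Perron--Frobenius theory for the induced positive semigroup gives a real, simple, dominant eigenvalue with a strictly positive eigenvector, and starting from a nonnegative $w(0)$ the quantity $\tfrac1t\ln\big(\sum_k w_k(t)\big)$ converges to this spectral abscissa. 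I would identify it via the matrix-determinant lemma: $\det(M-xI)=\det(D-xI)\big(1-\phi(x)\big)$ with $\phi(x)\equiv\lambda_x\sum_k \tfrac{q_k}{\lambda_x+x-x_k}$, so the eigenvalues are the roots of $\phi(x)=1$. On the rightmost interval $(x_n-\lambda_x,\infty)$ every summand of $\phi$ is positive and strictly decreasing, so $\phi$ drops from $+\infty$ to $0$ and meets the level $1$ at a unique point, the largest root and hence the dominant eigenvalue $x^{*}$. The stated fixed-point form is then the algebraic identity $\lambda_x\sum_k \tfrac{q_k x_k}{\lambda_x+x-x_k}=-\lambda_x+(\lambda_x+x)\,\phi(x)$, which equals $x$ exactly when $\phi(x)=1$.

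The three-sided bound follows cheaply. Evaluating at $x_n$, each summand obeys $\tfrac{\lambda_x q_k}{\lambda_x+x_n-x_k}\le q_k$, strictly for $k<n$, so $\phi(x_n)<\sum_k q_k=1$ and monotonicity of $\phi$ forces $x^{*}<x_n$; the lower bound $x^{*}>x_n-\lambda_x$ is built into the interval on which the root lives. For $x^{*}>\mu_x$ (the interesting case being $\mu_x>x_n-\lambda_x$, since otherwise the previous bound already gives it) I would apply Jensen's inequality to the strictly convex map $x\mapsto \tfrac{\lambda_x}{\lambda_x+\mu_x-x}$: its $q$-expectation, which is $\phi(\mu_x)$, strictly exceeds its value at the mean, namely $1$, so $x^{*}>\mu_x$. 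The two limits are read off the bounds: $x_n-\lambda_x<x^{*}<x_n$ squeezes $x^{*}\to x_n$ as $\lambda_x\to0$, while in the fixed-point equation $\tfrac{\lambda_x}{\lambda_x+x^{*}-x_k}\to1$ (as $x^{*}$ stays in the bounded range $(\mu_x,x_n)$) forces $x^{*}\to\mu_x$ as $\lambda_x\to\infty$.

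Finally, for the mean-preserving spread I would run monotone comparative statics on the root of $\phi_q(s)\equiv\mathbb E_q\big[\rho_s(x)\big]=1$, where $\rho_s(x)=\tfrac{\lambda_x}{\lambda_x+s-x}$ is strictly convex in $x$ on $(-\infty,\lambda_x+s)$. Fixing $s=x^{*}(q)$, a genuine spread $q'$ raises the expectation of this convex function, so $\phi_{q'}\big(x^{*}(q)\big)>\phi_{q}\big(x^{*}(q)\big)=1$; since $\phi_{q'}$ is strictly decreasing with root $x^{*}(q')$, this yields $x^{*}(q')>x^{*}(q)$, i.e.\ $f(X',q'_x,\lambda_x)>f(X,q_x,\lambda_x)$. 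I expect the main obstacle to lie exactly here: $\rho_{x^{*}(q)}$ has a pole at $\lambda_x+x^{*}(q)$, so the Jensen/Rothschild--Stiglitz step is legitimate only when the spread keeps the new support below that pole. I would settle this by a case split: if the spread pushes the maximal realization above $\lambda_x+x^{*}(q)$, then the lower bound gives $x^{*}(q')>x'_{\max}-\lambda_x>x^{*}(q)$ outright; otherwise $\rho_{x^{*}(q)}$ is finite and convex on a common domain and the comparative-statics argument goes through verbatim.
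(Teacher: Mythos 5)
Your proposal is correct, and its core convergence argument takes a genuinely different route from the paper's. The paper works with the population \emph{shares} $p_k(t)=w_k(t)/w(t)$, whose dynamics $\dot p_k=p_k F_k(p)$ are nonlinear (Lotka--Volterra-like); it exhibits the interior fixed point $p^*$ and then invokes Goh's global stability theorem (Lemma~\ref{lem:convergence-lemma}) to obtain global convergence, after which the fixed-point equation for $x^*=\sum_k p_k^*x_k$ and the monotonicity argument for its uniqueness coincide with yours. You instead exploit the linearity of the dynamics of the \emph{masses}, $\dot w=Mw$ with $M$ an irreducible Metzler diagonal-plus-rank-one matrix, so that convergence of the shares to the Perron direction and of $\tfrac1t\ln w(t)$ to the spectral abscissa is automatic, and the matrix determinant lemma yields $\phi(x)=1$ directly as the characteristic equation; this dispenses with any external Lyapunov machinery and makes the identification ``growth rate $=$ dominant eigenvalue'' transparent, whereas the paper's share-based route is the one that would survive genuinely nonlinear dynamics. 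One small point to nail down in your version: to rule out the dominant eigenvalue being an eigenvalue of $D$ and to place it in $(x_n-\lambda_x,\infty)$, use the strictly positive Perron eigenvector $v$, which forces $\lambda_x+\rho-x_k>0$ for every $k$ and then $\phi(\rho)=1$ upon summing $v_k=\lambda_x q_k(\mathbf 1^{\top}v)/(\lambda_x+\rho-x_k)$. The remaining steps match the paper's in substance: $x^*<x_n$ from $\phi(x_n)<1$, the two limits by squeezing and by $\lambda_x/(\lambda_x+x^*-x_k)\to1$, and the mean-preserving-spread claim from convexity of $x\mapsto\lambda_x/(\lambda_x+s-x)$ (the paper uses the equivalent $x\lambda_x/(\lambda_x+s-x)-s$ and deduces $x^*>\mu_x$ as a corollary of the spread result rather than by your direct Jensen argument). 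Your explicit case split around the pole at $\lambda_x+x^*(q)$ in the spread step addresses a point the paper's proof leaves implicit, so your treatment there is, if anything, the more careful one.
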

\begin{proof}[Sketch of proof; The full proof is in Appendix \ref{subsec:Proof-of-Theorem-one}]
Since the novel result here concerns heritable risk, let us suppose,
for simplicity, that there is no aggregate risk, idiosyncratic risk,
or mortality. Suppose further that the size of the population at time
$t$ is $w(t)$ and that a steady-state fraction $p_{k}$ of this
population has birth rate $x_{k}$.\footnote{The formal proof deals with the general case, and shows global convergence to the steady state.}  The net increase in each infinitesimal period $dt$ of
those agents with birth rate $x_{k}$ is then $p_{k}\cdot x_{k}\cdot w\left(t\right)\cdot dt$
(offspring born to parents with a birth rate $x_{k}$ who inherit
this rate) minus $(p_{k}-q_{k})\cdot\lambda_{x}\cdot w\left(t\right)\cdot dt$.
(Note that $\lambda_{x}\cdot w\left(t\right)\cdot dt$ agents have
redrawn a fresh value for the heritable birth rate, and the share
of $x_{k}$-agents among them has changed from $p_{k}$ to $q_{k}$.)
The increase in the total mass of agents is $\sum_{k}p_{k}\cdot x_{k}\cdot w\left(t\right)\cdot dt$
(the sum of offspring born to parents with each birth rate). The equilibrium
value of $p$ should match the ratio of the net increase of agents
with a high heritable birth rate to the net increase of the population,
such that
\[
p_{k}=\frac{\left(p_{k}\cdot x_{k}+(q_{k}-p_{k})\cdot\lambda_{x}\right)\cdot w\left(t\right)\cdot dt}{\sum_{k}p_{k}\cdot x_{k}\cdot w\left(t\right)\cdot dt}=\frac{p_{k}\cdot x_{k}+(q_{k}-p_{k})\cdot\lambda_{x}}{\sum_{k}p_{k}\cdot x_{k}\cdot}.
\]

Solving for $p_{k}$ yields (where $x^*\equiv\sum_{k}p_{k}\cdot x_{k}$):
\begin{equation}
p_{k}=\frac{\lambda_{x}\cdot q_{k}}{\lambda_{x}+x^*-x_{k}}.\label{eq:p-main}
\end{equation}
This solution assumes that $p_{k}$ is positive for all $k$ so that
$x^*>x_{n}-\lambda_{x}.$ Next we multiply each $k$-th equation
by $x_{k}$ and sum to an equation in one unknown:

\begin{equation}
x^*=\sum_{k}\frac{x_{k}\cdot\lambda_{x}\cdot q_{k}}{\lambda_{x}+x^*-x_{k}}.\label{eq:x-star-main}
\end{equation}
Observe that in the domain $x^*>x_{n}-\lambda_{x}$
the LHS (resp., RHS) is increasing (resp., decreasing) in $x^{*}$,
which implies that there exists a unique solution $x^*>x_{n}-\lambda_{x}$
to Eq. (\ref{eq:x-star-main}). Substituting this solution in Eq.
(\ref{eq:p-main}) yields the unique steady-state distribution $p$. {From Eq.  (\ref{eq:x-star-main}) it follows that} \begin{equation}
x^*=\sum_{k}\frac{x_{k}\cdot q_{k}}{1+\frac{x^*-x_{k}}{\lambda_x}} \rightarrow\sum_{k}x_{k}\cdot q_{k},\label{eq:x-star-sub}
\end{equation}
 {as $\lambda_x\rightarrow\infty.$ Since $x^*\in (x_n-\lambda_x,x_n),$ it is also immediate that $x^*\rightarrow x_n$ as $\lambda_x\rightarrow 0.$}

The final claim is proved as follows. Eq. (\ref{eq:x-star-main})
can be written as
\begin{equation}
\boldsymbol{E}_{\boldsymbol{x}}\left[\frac{\boldsymbol{x}\cdot\lambda}{\lambda+x^*-\boldsymbol{x}}\right]=x^*,\label{eq:x*-expect}
\end{equation}
where $\boldsymbol{x}$ is the random variable $\left(X,q_{x}\right)$.
The fact that $\frac{\boldsymbol{x}\cdot\lambda}{\lambda+x^*-\boldsymbol{x}}$
is a convex function of $\boldsymbol{x}$ implies that it increases
following a mean-preserving spread. {This, together with the fact that it is decreasing in $x^{*}$,} implies that in
order to maintain Eq. (\ref{eq:x*-expect}) following a mean-preserving
spread, the growth rate $x^{*}$ must increase.
\end{proof}

\section{Risk Attitude\label{sec:Risk-Attitude}}

We suppose that individuals in a large population may have different
types, where the type represents the agent's risk attitude---in particular, how
the agent chooses between a risky consumption option and a safe one. An agent has the same
type as her parent. Occasionally, new types
may be introduced into the population as genetic mutations.
Observe that the population share of agents that are endowed with
the type that induces the highest long-run growth rate for its practitioners
will grow, until, in the long run, almost all agents are of this type.
{For example, suppose that there are two types $\theta,\theta'$
in the population, each with an initial frequency of 50\% that induce growth rates $g(\theta), g(\theta')$, respectively. After
time $t$ the share of agents having type $\theta$ will be $\frac{e^{g\left(\theta\right)t}}{e^{g\left(\theta\right)t}+e^{g\left(\theta'\right)t}},$ which converges to one as $t\rightarrow\infty$, if $g\left(\theta\right)>g\left(\theta'\right)$.
See \cite{robson2011evolutionary} and the citations therein,
for a more detailed argument of why natural selection induces agents
to have types that maximize the long-run growth rate.}

Now consider a setup in which agents face choices between various
alternatives, where each alternative corresponds to a lottery over
the consumption rate. We assume that the birth rate is a concave increasing
function of consumption, given by $\psi:\boldsymbol{R}^{+}\rightarrow\boldsymbol{R}^{+}$.
To simplify the presentation, assume that the birth rate is entirely
heritable; the result remains qualitatively the same if the birth
rate induced by consumption has all three risk components (heritable, idiosyncratic,
and aggregate). We now argue that a growth-rate-maximizing
type induces agents
(1) to be risk averse with respect
to most lotteries over consumption,
and, yet, (2) to strictly prefer
some fair lotteries that are sufficiently skewed. Thus,
natural selection should induce agents to have a risk attitude combining
risk aversion and skewness loving.

For simplicity, assume that an agent faces choices among lotteries over consumption $\left(C,q\right)$ with a finite support $C$, where $c>0$ for all $c\in C.$ Suppose probabilities are assigned by $q:C\rightarrow[0,1],\sum_{c\in C}q(c)=1$. Let $m=\max\left\{ c\in C\right\} $ be the maximal possible realization and let
$\bar{c}=\sum_{c\in C}q(c)\cdot c$ be the mean. 
For any fixed lottery, we show that, once $\psi$ is sufficiently concave, the constant consumption rate of $\bar{c}$ will induce a higher long-run
growth rate than the lottery $\left(C,q\right)$. This explains why
the growth-rate-maximizing type should induce the agents to be risk
averse with respect to most lotteries, when $\psi$ is sufficiently concave. Consider, for example, the function $\psi(c)=c^{\beta}$ for $\beta \in (0,1]$. Theorem \ref{thm:main-result} shows that the individual prefers the lottery $\left(C,q\right)$ to the mean $\bar{c}$ when $\beta =1$ so that $\psi(c)=c$. However, if $\beta$ is small enough this preference is reversed. This is formalized in the following 
proposition
that shows that, given any lottery over consumption, the individual will prefer the mean consumption to the lottery if $\beta$ is small enough.

\begin{proposition}\label{smallbeta}
 {Suppose that $\psi(c)=c^{\beta}$ for $\beta \in (0,1]$. Then,} given any gamble $\left(C,q\right)$, the mean $\bar{c} =\sum_{c\in C}q\left(c\right)\cdot c$ induces a higher growth rate than the lottery $(C,q)$, if $\beta>0$ is close enough to $0$.
\end{proposition}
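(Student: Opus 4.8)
The plan is to compare two growth rates using Theorem~\ref{thm:main-result}. Since the birth rate is taken to be entirely heritable, the lottery $(C,q)$ induces heritable outcomes $x_{k}=c_{k}^{\beta}$ with probabilities $q(c_{k})$, so by Theorem~\ref{thm:main-result} its growth rate is the unique positive solution $x^{*}=x^{*}(\beta)$ of the fixed-point equation $x^{*}=\lambda_{x}\sum_{k}q(c_{k})\,c_{k}^{\beta}/(\lambda_{x}+x^{*}-c_{k}^{\beta})$, with $m=\max C$. The safe alternative $\bar c$ gives a degenerate heritable distribution and hence growth rate exactly $\bar c^{\beta}$. The goal is therefore to show $x^{*}(\beta)<\bar c^{\beta}$ for all small enough $\beta>0$ (for a genuine, non-degenerate gamble, so that strict inequality is meaningful).

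First I would exploit the monotonicity already established in the proof sketch of Theorem~\ref{thm:main-result}: on the domain $x>m^{\beta}-\lambda_{x}$ the map $\Phi(x)=\lambda_{x}\sum_{k}q(c_{k})\,c_{k}^{\beta}/(\lambda_{x}+x-c_{k}^{\beta})$ is strictly decreasing and $x^{*}$ is its unique fixed point, so that for any $x$ in this domain one has $x>x^{*}$ if and only if $\Phi(x)<x$. It thus suffices to evaluate $\Phi$ at the candidate point $\bar c^{\beta}$ and prove $\Phi(\bar c^{\beta})<\bar c^{\beta}$, which sidesteps solving for $x^{*}$ explicitly. One first checks that $\bar c^{\beta}$ indeed lies in the domain for small $\beta$, since $\bar c^{\beta}\to 1$ while $m^{\beta}-\lambda_{x}\to 1-\lambda_{x}<1$.

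The heart of the argument is a first-order expansion in $\beta$. Writing $c^{\beta}=1+\beta\ln c+O(\beta^{2})$ and $\bar c^{\beta}=1+\beta\ln\bar c+O(\beta^{2})$, a direct computation yields
\[
\Phi(\bar c^{\beta})-\bar c^{\beta}=\beta\Bigl(1+\tfrac{1}{\lambda_{x}}\Bigr)\Bigl(\textstyle\sum_{c\in C}q(c)\ln c-\ln\bar c\Bigr)+O(\beta^{2}).
\]
By strict Jensen's inequality applied to the concave function $\ln$, the factor $\sum_{c}q(c)\ln c-\ln\bar c$ is strictly negative for a non-degenerate lottery, so the leading coefficient is negative and $\Phi(\bar c^{\beta})<\bar c^{\beta}$ once $\beta$ is small, whence $x^{*}(\beta)<\bar c^{\beta}$. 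Intuitively, as $\psi$ becomes extremely concave the lottery's heritable growth rate collapses to first order to the \emph{geometric} mean $\exp(\beta\sum_{c}q(c)\ln c)$, which is strictly dominated by the safe option's \emph{arithmetic} mean $\bar c^{\beta}$.

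The main obstacle I anticipate is not the identity but its rigor: both $x^{*}(\beta)$ and $\bar c^{\beta}$ tend to $1$ as $\beta\to 0$, so crude bounds such as $x^{*}<m^{\beta}$ are useless, and one genuinely needs the first-order term to separate the two. Care is therefore required to control the $O(\beta^{2})$ remainder (the denominators $\lambda_{x}+\bar c^{\beta}-c_{k}^{\beta}$ stay bounded away from $0$, which keeps the expansion valid) and to confirm the domain condition used for the monotonicity reduction. The degenerate case, where $\bar c$ equals the sole outcome and the two growth rates coincide, is excluded by restricting to genuine gambles, which is exactly what makes Jensen's inequality strict.
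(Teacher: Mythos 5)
Your proposal is correct and follows essentially the same route as the paper's proof: reduce the comparison to evaluating the fixed-point map at $\bar c^{\beta}$ via the monotonicity of the two sides of Eq.~(\ref{xstar}), then expand to first order at $\beta=0$ (your coefficient $\beta\bigl(1+\tfrac{1}{\lambda_x}\bigr)\bigl(\sum_{c}q(c)\ln c-\ln\bar c\bigr)$ is exactly the paper's $\left.\tfrac{d(S-R)}{d\beta}\right|_{\beta=0}$), and conclude by strict Jensen applied to $\ln$. Your added remarks on the domain condition, the remainder control, and the degenerate case are sound refinements of the same argument.
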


\noindent \textbf{Proof} See Appendix \ref{subsec:Proof-of-Lemma}.

On the other hand, for a fixed function $\psi(c)$, if the lottery $\left(C,q\right)$ is sufficiently skewed---i.e., if
$m$ is high enough and $q\left(m\right)$ is low enough so that $\psi(m)-\lambda_{x}>\psi(\bar{c})$---then the
lottery induces a strictly higher growth rate than the constant consumption
rate of $\bar{c}$. This follows from Theorem \ref{thm:main-result} since the lottery's
long-run growth rate is bounded from below by $\psi(m)-\lambda_{x}$.
This implies that growth-rate-maximizing agents
would prefer a sufficiently positively skewed lottery to its expectation.

{The above argument suggests that natural selection has
induced people to be generally risk averse and sometimes skewness
loving. As biological types evolve slowly, it seems likely that this
risk attitude persists in modern times, in which, arguably, the
birth rate is no longer increasing in the consumption rate. Thus,
our findings fit the stylized fact that people, although being in
general risk averse, are skewness loving, in the sense of being risk
loving with respect to lotteries involving a small probability of
winning a high prize (e.g., buying state lottery tickets; see
\citealp{golec1998bettors,garrett1999gamblers}). }

\section{ {An Extended Model: Dynasties}\label{sec:Dynasties-and-Structured}}

In our baseline model, the event of an agent redrawing her heritable
birth rate is independent of her parent's redrawing event.
In various environments, it seems plausible that members of a  {dynasty} may change their
heritable birth rate together,  while remaining independent of other dynasties. For example, if heritable risk is induced
by a foraging technique or  {a geographical location}, and environmental changes affect the effectiveness
of the foraging technique, then an entire dynasty of agents  {(who use the same foraging technique or live in the same geographical location)} may simultaneously change their heritable birth rate. 

In this section we extend our baseline model by introducing dynasties, and allowing dependency between redraws of heritable risk within each dynasty. We show that this extension does not affect our results for infinite populations. By contrast, this structure can affect the growth rate of finite populations, which we investigate by numerical simulations. 

\paragraph{Extended model}
 In what follows we extend our baseline model to a continuum of dynasties. We adopt the same notation as in the baseline model. The processes according to which agents die, are born, and change their idiosyncratic and aggregate birth components remain the same as in the baseline model. Importantly, each offspring is born into the same dynasty as her parent.

 {Let $[0,1]$ be the set of dynasties, where each agent $i$ in the initial population (of mass one) lives in a different dynasty $i\in[0,1]$. Each dynasty is initially endowed with a heritable birth rate according to the distribution $q_x$. Formally, we assume that the mass of dynasties having heritable birth component $x_k\in X$ is equal to $q_x(x_k).$ The heritable birth component of each agent is tied to the heritable birth component of all members of her dynasty.}

 {There are two processes that change the heritable birth component of agents. We begin with an intuitive description of two Poisson processes that change the heritable birth  component: migration and a dynasty's 
 redraw. We then specify the corresponding exact evolution of the distribution of the heritable birth component as the product of these two processes.}

\begin{enumerate}
\item  {\emph{Migration}:}  
 {Intuitively, in each infinitesimal time $dt$ each agent has a probability of $\lambda_m\cdot dt$ (where $\lambda_m\geq0$) to leave her dynasty and move to a new random dynasty (distributed uniformly in the set of all dynasties [0,1]). These migration events are independent of all other events. Following the migration, the agent is endowed with the heritable birth component of her new dynasty.} 

\item  {\emph{Dynasty's redraw}:}  {Intuitively, in each infinitesimal time $dt$ each dynasty $j\in[0,1]$ has a probability of $\lambda_r \cdot dt$ to redraw a fresh value for its heritable birth component. These redrawing events are independent of all other events. When a dynasty redraws its heritable component it changes the heritable component of all agents living in that dynasty.}
\end{enumerate}

 {Next, we formulate the precise dynamics of the mass of agents $w_k(t)$ who are endowed with the heritable component $x_k$ that is induced by the combined effect of migration and a dynasty's redraws. The rate of increase of $w_k(t)$ is}

\begin{equation}
\frac{dw_{k}(t)}{dt}=w_{k}(t)x_{k}-\lambda_{m}w_{k}(t)+\lambda_{m}w(t)q_{x}(x_{k})-\lambda_{r}w_{k}(t)+\lambda_{r}w(t)q_{x}(x_{k})-\delta w_{k}(t).\label{eq:w_k_t_ex}
\end{equation}

 {The first and final terms are identical to Eq. (\ref{eq:w_k_t}) of the baseline model. The first term expresses the increase in $w_k(t)$ due to offspring who are endowed with $x_k$. The final term represents the loss from $w_k(t)$ due to death.}

 {The second and third terms express the impact of migration. The second term ($-\lambda_{m}w_{k}(t)$) is the loss from $w_k(t)$ of agents who migrate out of dynasties with heritable component $x_k$. The third term ($\lambda_{m}w(t)q_{x}(x_{k})$) represents the increase due to all agents from $w(t)$ (including those from  $w_k(t)$) who migrate into dynasties with heritable component $x_k$.}

 {The fourth and fifth terms express the impact of redraws of dynasties. The fourth term ($-\lambda_{r}w_{k}(t)$) represents the loss from  $w_k(t)$ of agents who live in dynasties with heritable component $x_k$ that redraw a fresh draw. Finally, the fifth term ($\lambda_{r}w(t)q_{x}(x_{k})$) represents the increase due to all agents from dynasties (including dynasties that already had $x_k$) that draw a fresh value of heritable component $x_k$.}

 {Observe that Eq. (\ref{eq:w_k_t_ex}) is equivalent to Eq. (\ref{eq:w_k_t}) of the basic model except that $\lambda_x$ is replaced with $\lambda_m+\lambda_r$. That is, the dynamics of $w_k(t)$ in the extended model is exactly the same as in the baseline model with $\lambda_x=\lambda_m+\lambda_r$. As the impact of the heritable component on the growth rate is exactly captured by $w_k(t)$, this implies that all of our results hold in this extended setup with dynasties.}
\paragraph{Ever-growing population with dying dynasties}

Consider a simple case in which: (1) $\lambda_{m}=0$, i.e., agents
never migrate, and each dynasty is an isolated subpopulation, (2)
all risk is heritable, (3) the growth rate predicted by the continuum
model is positive, and (4) an aggregate birth rate with the same marginal
distribution induces a negative growth rate. For example, assume
that the heritable birth rate of each dynasty is randomly chosen to be
either $x_{l}=0\%$ or $x_{h}=2\%$ with equal probability, that there
is a constant death rate of
$\delta=1.4\%$, and that the redrawing rate
of the heritable risk by each dynasty is given by $\lambda_{r}=2\%$.
Theorem \ref{thm:main-result}
and Claim \ref{cla:binary} imply that this heritable birth rate induces a positive growth rate
of 0.014\%, while if the birth rate were induced by aggregate risk
with the same distribution, then the growth rate would be negative:
$-0.4 \% =\left(0.5\cdot0\%+0.5\cdot2\%\right)-1.4\%$.

Each dynasty is a completely isolated subpopulation with
risk that is essentially aggregate within the subpopulation. Thus, each dynasty
is doomed to extinction since it has a negative growth rate of $-0.4\%$.
This yields a seemingly paradoxical result: the entire population
grows exponentially, while each of its dynasties eventually becomes
extinct. Such a result holds with a continuum of dynasties.
Although each dynasty eventually dies, in each finite time there is
still a continuum of surviving dynasties with a large realized growth
rate, such that the growth rate of the entire population can be positive.

The intuition behind this result can be {illustrated}
more clearly
in a simple {alternative} setup in which each dynasty in each period can be either
successful or go extinct with equal probability. A successful dynasty
increases its size by a factor of 4 in each period. Observe that the
expected size of each dynasty after $t$ period is $2^{t}$, which
is the product of a tiny probability of $0.5^{t}$ of the dynasty
surviving and the very large size of the dynasty ($4^{t}$), conditional
on surviving. If the population includes a continuum of mass one of
dynasties, then (by applying an exact law of large numbers) after
$t$ periods the size of the population is $2^{t}$ (with probability
one), and this population is concentrated on a continuum of a small
mass of $0.5^{t}$ of surviving large dynasties. Thus, the population's
size converges to infinity, even though the share of surviving dynasties
converge to zero. By contrast, if the number of dynasties were finite
(instead of a continuum), then after a sufficiently long finite time,
the population's size would eventually be zero with probability one.

\paragraph{Finite populations}

The result of an ever-growing population in which each dynasty is
eventually doomed cannot happen when the number of dynasties is finite.
Since each dynasty is doomed to extinction, so too
is the overall population. However, the fact that the mean size of
each subpopulation is growing implies that the overall population may grow
significantly in the interim. As the finite model converges to the
continuum model, this initial growth phase becomes more and more prolonged,
and the inevitable ultimate demise of the population is postponed
indefinitely.

When there is no migration, a large 
{finite} population tends to
ultimately put all its eggs in one basket. That is, the distribution
of the finite population over its subpopulations tends to become very
unequal, often concentrated in just one subpopulation. Such large
subpopulations hold up the mean, which is the growth rate found here. Once
the population is concentrated like this, however, doom is inevitable
because the 
heritable risk of a large subpopulation, essentially, becomes
an aggregate risk since it affects a large share of the entire population.

Migration introduces a new element to these observations. In the finite
model migration has a distinct effect from that of the redraw rate.
If some subpopulations grow large, and others shrink, migration acts
to redistribute the population. This means that the population can
exploit the numbers in the large subpopulations, while diversifying
the risk. These observations motivate the simulations described below.

\subsection{Numerical Analysis of Finite Populations\label{sec:Numeric-Analysis-of}}

In this section we present simulations that test whether our theoretical
results for continuum populations hold for finite populations.

\paragraph{Description of the Simulation}
The simulation is a discrete-time  version of the extended model (with dynasties) described above. Specifically, the basic time step of the simulation is one year, and we replace each continuous Poisson rate with the respective independent per-year probability (e.g., an annual birth rate of $2\%$ is replaced with an independent probability of $2\%$ of each agent giving birth in each year). The
Python code (contributed by Renana Heller) is included in the online supplementary material.

We describe here the results of 150 simulation runs, which comes from
15 runs of 10 different parameter combinations. In each simulation
run, the initial population includes 3,000 agents that are initially
randomly allocated to 300 dynasties. The aggregate birth rate and the
idiosyncratic birth rate are both equal to zero (i.e., $\mu_{y}=\mu_{z}=0$).
The {heritable}
birth rate in each dynasty is randomly chosen to be either
$x_{l}=0\%$ or $x_{h}=2\%$ with equal probabilities (i.e., $q=0.5$).
We set the total annual rate at which each agent switches the {heritable} birth rate to be $\lambda_{m}+\lambda_{r}=2\%$. We set the annual
death rate at 1.4\%, which implies that the theoretical prediction
for a continuum population (see Claim \ref{cla:binary} in Appendix
\ref{sec:Explicit-Solution-for}) is that: (1) the share of agents
with a high {heritable} birth rate converges to about $71\%$, and (2) the
annual long-run growth rate will be about $0.014\%$. A naive prediction
that treats {heritable} risk as if it were aggregate risk predicts a long-run
growth rate of $-0.4\% =\left(0.5\cdot0\%+0.5\cdot2\%\right)-1.4\%$.
Due to technical constraints and time limits we stopped each simulation
run after (1) 20,000 years have passed, (2) the population size increases
by 300-fold to 1,000,000 or more, or (3) the population size decreases
by 300-fold to 10 or less (henceforth, \emph{extinction}). The various
simulation runs study 10 different ratios $\frac{\lambda_{m}}{\lambda_{r}}$
of the migration rate relative to the {dynastic} risk redrawing rate (while
maintaining $\lambda_{m}+\lambda_{r}=2\%$): 0.01, 0.02, 0.05, 0.1,
0.25, 0.5, 1, 2, 4, 10.

\paragraph{Numerical Results}

Figure \ref{fig:Sample-Results-for} presents four representative
simulation runs with ratios: 0.01 ($\lambda_{m}=0.02\%$, $\lambda_{r}=1.98\%$),
0.05 ($\lambda_{m}=0.1\%$, $\lambda_{r}=1.9\%$), 0.25 ($\lambda_{m}=0.4\%$,
$\lambda_{r}=1.6\%$), and 1 ($\lambda_{m}=\lambda_{r}=1\%$).

\begin{figure}[h]
\begin{tabular}{cc}
\includegraphics[scale=0.21]{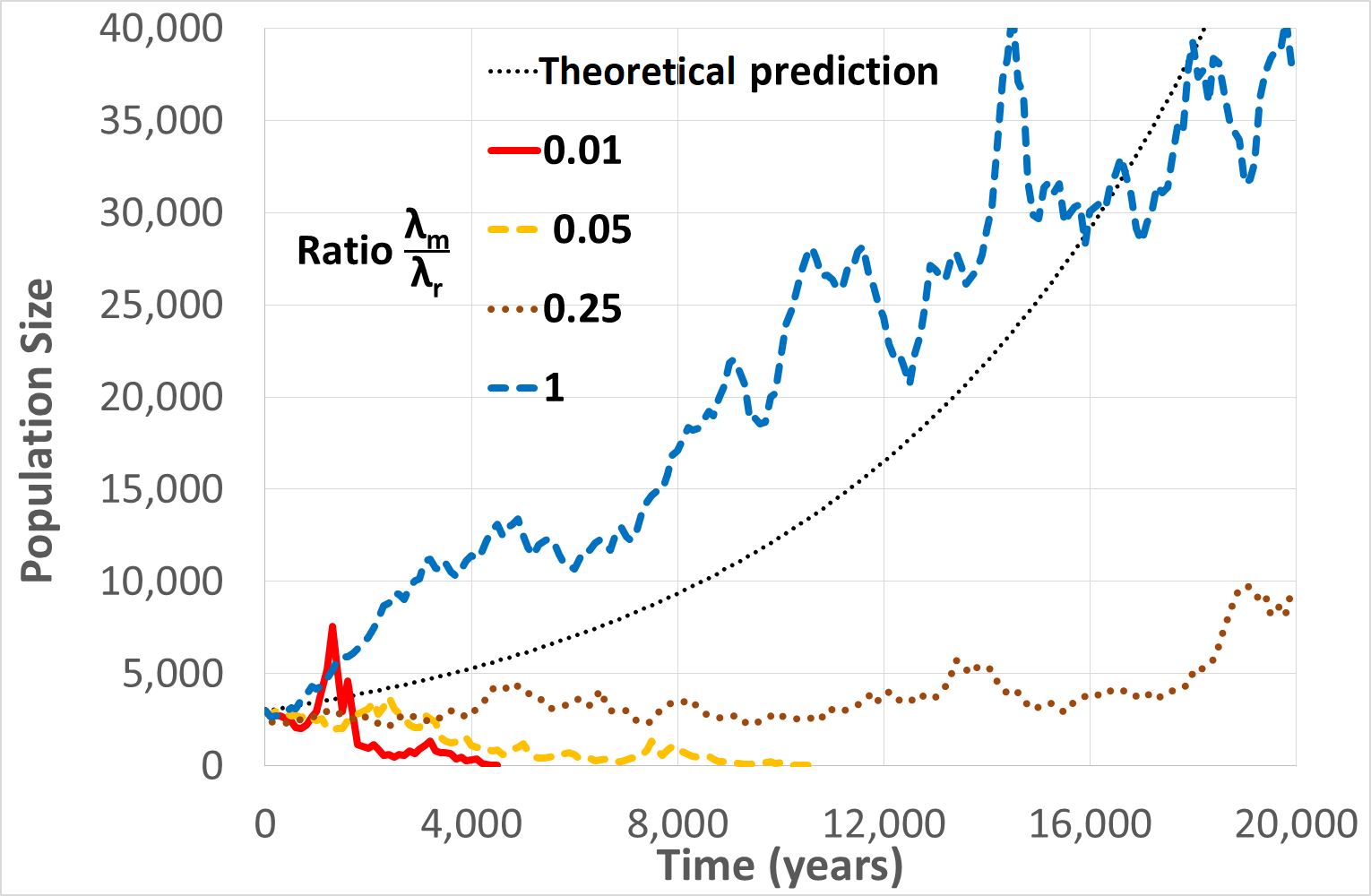}  & \includegraphics[scale=0.33]{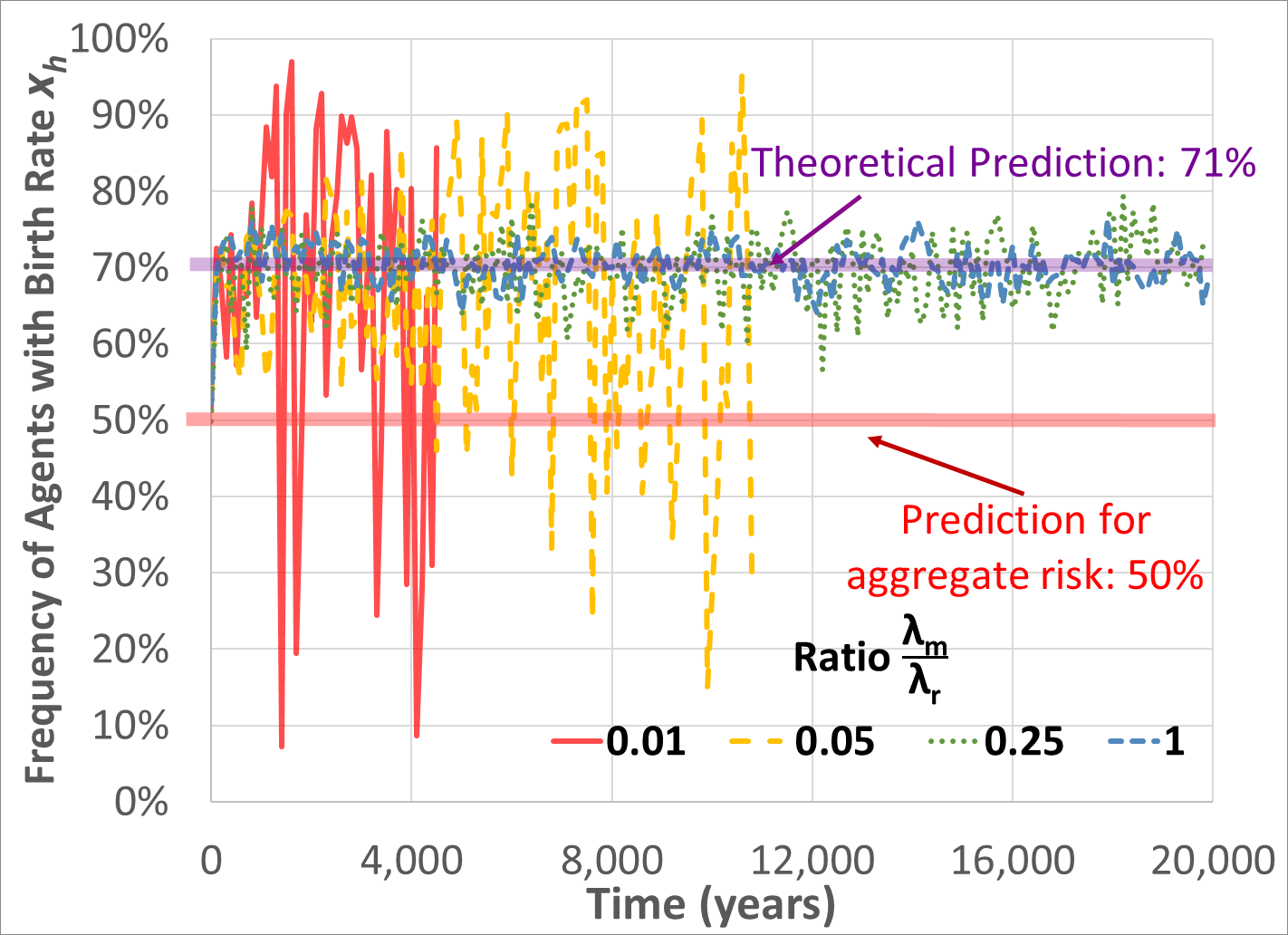}\tabularnewline
\includegraphics[scale=0.34]{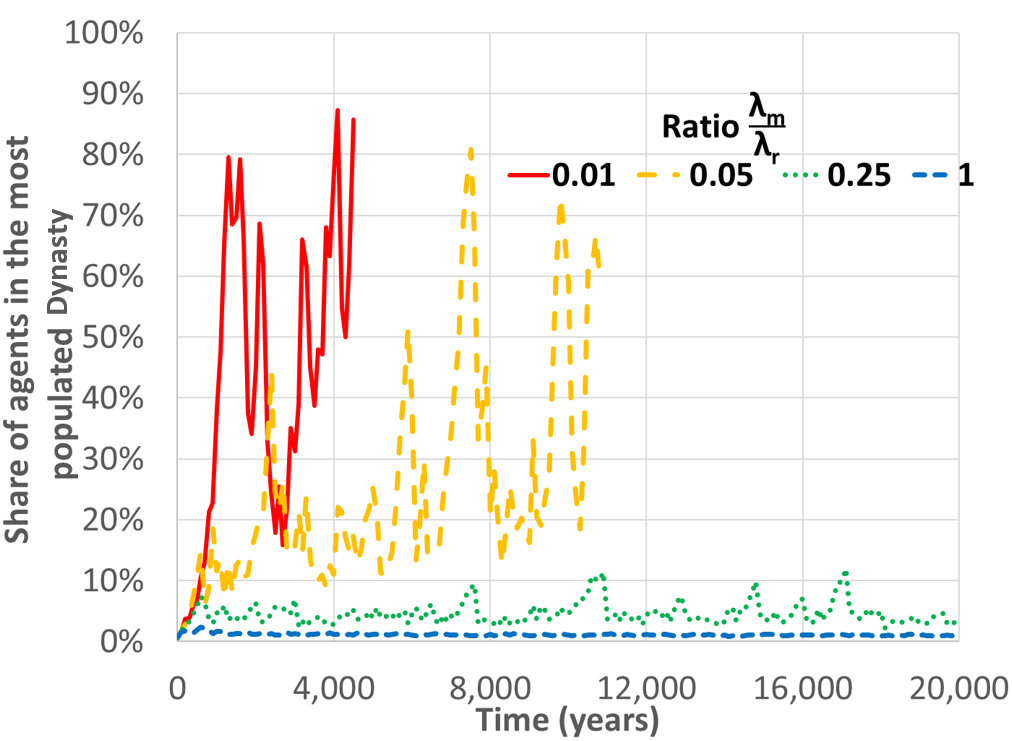}  & \includegraphics[scale=0.33]{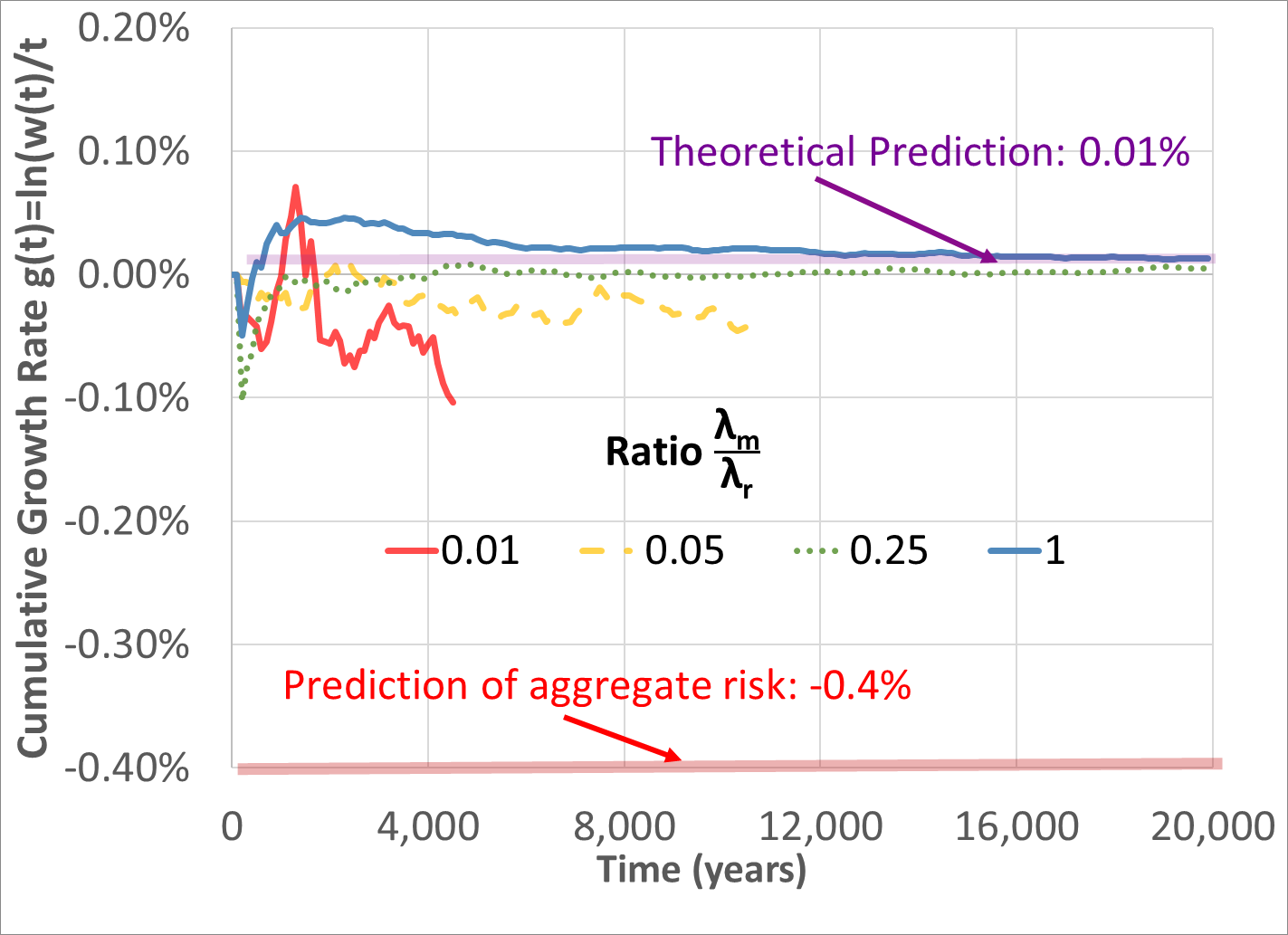}\tabularnewline
\end{tabular}~

\caption{\label{fig:Sample-Results-for}Representative Simulation Runs for
four Ratios of $\frac{\lambda_{m}}{\lambda_{r}}$}
\end{figure}

The top-left panel of Figure \ref{fig:Sample-Results-for} shows the
dynamics of the total population in each of the four simulation runs.
The top-right panel shows how the frequency of agents that are endowed
with a high {heritable} birth rate evolves. The bottom-left panel shows
the percentage of agents that live in the most populated {dynasty}
(among the 300 dynasties). The bottom-right panel shows the cumulative
growth rate up to time $t$ in each year (i.e., it shows $g\left(t\right)=\frac{\ln\left(w\left(t\right)\right)}{t}$).

The figure shows that when the ratio $\frac{\lambda_{m}}{\lambda_{r}}$
is small (0.01 or 0.05), dynastic risk has similar properties to aggregate
risk. The low rate of 
migration implies that a couple of ``successful''
dynasties (which happen to have had a high {heritable} birth rate for a
long time) contain most of the population. This causes the {heritable} risk,
essentially, to be aggregate. The frequency of agents with a high
{heritable} birth rate has large fluctuations, since a single change of
the {heritable} birth rate of the most populated {dynasty} has a large impact
on this frequency. This is shown in the top-right panel. The cumulative
growth rate (bottom-right panel) is initially positive, but after
a couple of thousand years it becomes negative and starts converging
to the negative growth predicted by aggregate risk, until the population
becomes extinct (top-left panel).

By contrast, Figure \ref{fig:Sample-Results-for} shows that when the ratio $\frac{\lambda_{m}}{\lambda_{r}}$
is 0.25 (resp., 1), then the theoretical prediction for the continuum case becomes relatively
(resp., very) accurate for the finite population. When the migration
rate is sufficiently high, a ``successful'' {dynasty} spreads its
offspring to many other dynasties, staving off extinction. The bottom-left
panel shows that the frequency of agents living in the most populated
{dynasty} is at most 10\% (resp., 2\%). This implies that the share
of agents with a high {heritable} birth rate has a relatively (resp., very)
small fluctuations around Claim \ref{cla:binary}'s predicted value
of about 71\%, as can be seen in the top-right panel. The cumulative
growth rate (bottom-right panel) converges to the positive value of
0.01\%, as predicted in Claim \ref{cla:binary}, as is shown in the
top-left panel.

\begin{figure}
\begin{centering}
\includegraphics[scale=0.44]{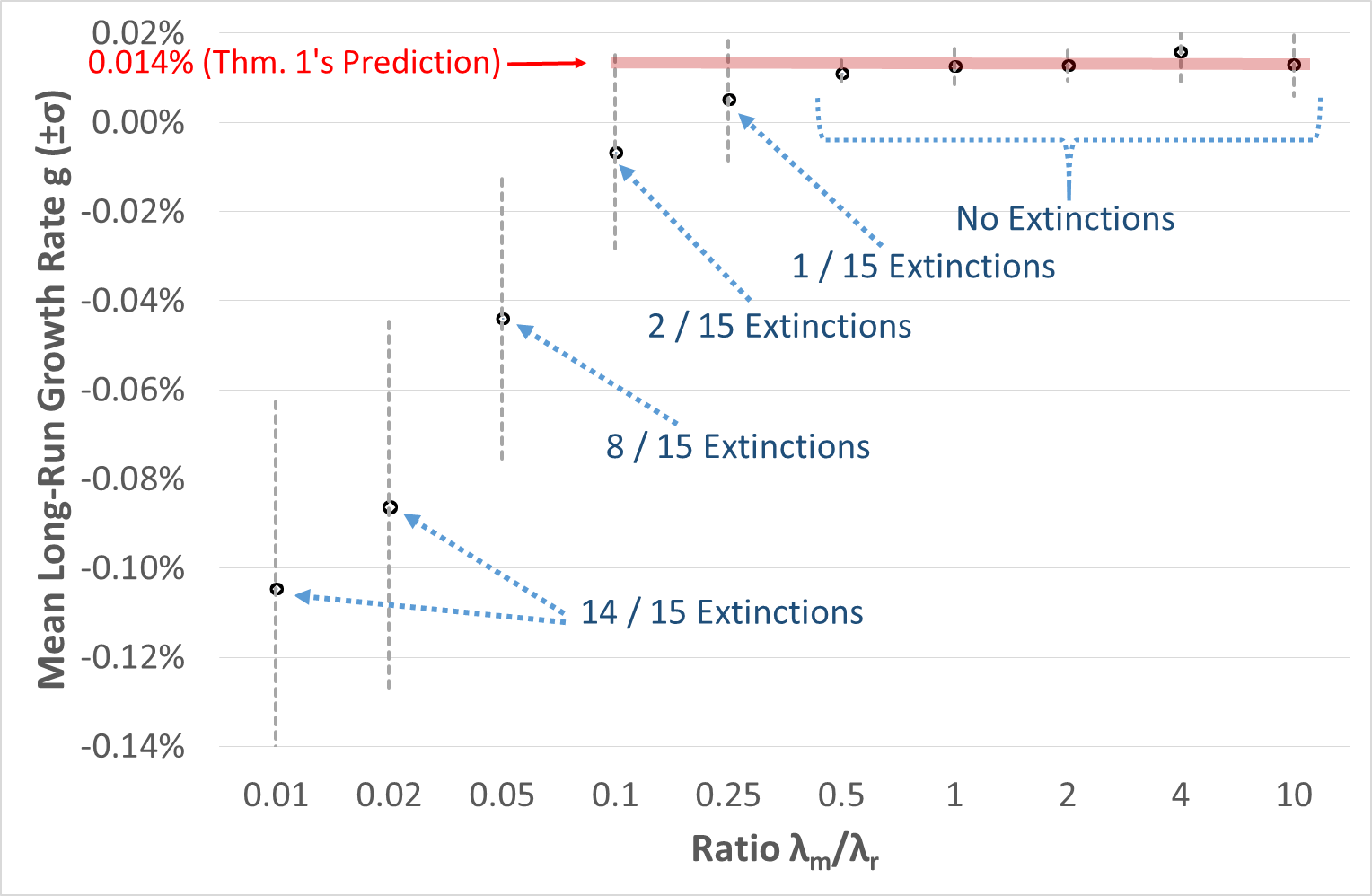}
\par\end{centering}
\caption{\label{fig2:Mean-Growth-Rate}Mean Long-Run Growth Rate for each Ratio
of $\frac{\lambda_{m}}{\lambda_{r}}$}
The black points describe the mean growth rate of 15 simulation runs
for each ratio of $\frac{\lambda_{m}}{\lambda_{r}}$. The vertical
bars show intervals of one standard deviation on each side of the
mean. The labels describe how many simulation runs ended in an extinction
of the population.
\end{figure} 
Figure \ref{fig2:Mean-Growth-Rate} presents the mean long-run growth
rate obtained in the 15 simulation runs for each of the ten ratios of $\frac{\lambda_{m}}{\lambda_{r}}$.
{The results show that conclusions drawn from the four representative simulation runs presented in Figure \ref{fig:Sample-Results-for} are indeed valid for the entire set of 150 simulation runs.}

\section{Discussion}\label{sec:Discussion}

\paragraph{Asexual reproduction}

	Our model, like the related literature, makes the simplifying assumption that reproduction is asexual, where offspring are identical to the parent. Similar results should hold if reproduction were sexual and haploid, where a single genetic variant---an allele---that determines choice is inherited with probability $1/2$ from either parent. That is, if a particular choice in a gamble is currently favored, this advantage will hold in a muted form if offspring inherit it through haploid sex. Further, if the gene controlling choice is evident to a mate, homophily---a preference for like individuals---would accentuate this advantage, bringing the model back to the asexual case.

\paragraph{Horizontal and vertical correlation}
An insight of our model is that vertical correlation increases the growth rate, but horizontal correlation decreases it. Horizontal correlation is called \emph{within-generation bet hedging }by \citeauthor{lehmann2007natural} (\citeyear{lehmann2007natural}). Vertical correlation is called the \emph{multiplayer effect} by \citeauthor{mcnamara2011evolution} (\citeyear{mcnamara2011evolution}) who study a non-overlapping generations model in which an asexual species breeds annually in one of a large number of breeding sites. Each site can be either good (high expected number of offspring) or bad. In each generation each site changes its type with probability less than 0.5. Each animal observes a noisy signal about the quality of the site in which it was born, and it has to choose whether to stay or to migrate to a new site. \citeauthor{mcnamara2011evolution} show that when the signal is sufficiently noisy, it is best for nature to induce each animal to ignore the signal, and always stay in its birth site because the mere fact that the animal was born in the site makes it more likely that the site is good. 

\paragraph{Additive separability}
Our model assumes that the various component of risk are additively separable. This assumption  clearly facilitates the analysis. It permits a direct comparison of the implications of the three types of risk. Separability seems intuitively unlikely to be crucial
to the results. At the least, there ought to be approximate results for a general non-separable criterion and small aggregate, heritable and idiosyncratic components. Further, it seems that it would be possible to allow for arbitrary aggregate shocks with heritable and idiosyncratic shocks conditional on the aggregate state, much as in \cite{robson1996biological}.
\paragraph{Age structure}

Recently, a different approach was applied by \cite{Robson-Samuelson-2019}
to show that risk interdependence matters in a continuous-time setting
(see also related results in \citealp{robson2009evolution}). Specifically,
they show that adding age structure to  \citeauthor{robatto2017biological}'s
(\citeyear{robatto2017biological}) setting (i.e., allowing the fertility
rate to depend on the agent's age) implies that interdependence of
risk influences the growth rate. By contrast, the present paper shows
that interdependence of risk is important for the induced growth rate
in a hierarchical population, even when the age structure is trivial, but still in a continuous-time setting. It would be interesting for
future research to study the implications of 
heritable risk in age-structured
populations.

\paragraph{Migration between fragmented habitats}

Our numerical analysis suggests an important advantage to connecting
isolated small habitats of an endangered species. The related existing
literature (e.g., \citealp{burkey1999extinction,smith2002population})
shows that having several isolated small habitats for a species induces
a larger extinction probability relative to a situation in which the
species lives in a single large habitat. This result holds in a setup
in which the birth rates are decreasing in the population's density,
and are deterministic. The present paper shows that connecting isolated
small habitats with migration increases the long-run growth rate.
We adopt a complementary setup of the birth rate that does not depend
on the population's density, but does have a dynastic stochastic component {(the heritable component of the birth rate}).

\section{Conclusion}\label{sec:Conclusions}

In this paper, we demonstrate that a crucial aspect of the evolution of a population exposed to risk is inheritance. If the actual choice made by a parent is inherited by her offspring, this 
induces a correlation between the parent's risk and the offspring's risk. A type that does this will outperform types that are exposed to either idiosyncratic or aggregate risk.
This result is a force favoring risk-taking. Although most risk-taking may be reversed by a sufficiently concave relationship between resources and offspring, positively skewed lotteries that involve high enough prizes, but relatively low means, will be taken.

\appendix

\section{\label{subsec:Proof-of-Theorem-one}Proof of Theorem \ref{thm:main-result}}

The following global convergence result of \cite{goh1978global} will
be helpful in the proof
\begin{lem}[{{{{{{{\citealp[Theorems 1 and 2]{goh1978global}}}}}}}}]
\label{lem:convergence-lemma}Consider the system of $n$ differential
equations
\[
\frac{dp_{k}\left(t\right)}{dt}=p_{k}\left(t\right)\cdot F_{k}\left(p\right),k=1,..,n,
\]
where each $F_{k}\left(p\right)$ is a continuous function of $p\in\boldsymbol{R}_{+}^{n}\equiv\left\{ p|p_{k}>0\,\forall k\in\left\{ 1,..,n\right\} \right\} $. Suppose there is a fixed point $p^{*}>{0}$ satisfying $0=p_k^{*}\cdot F_{k}\left(p^{*}\right)$
for each $k$. 
Assume further that there exists a constant matrix $E$ such that for all
$p\in\boldsymbol{R}_{+}^{n}$: (1) $\frac{\partial F_{k}\left(p\right)}{\partial p_{k}}\leq E_{kk}<0$
for each $k\in\left\{ 1,..,n\right\} $, and (2) $\left|\frac{\partial F_{k}\left(p\right)}{\partial p_{j}}\right|\leq E_{jk}$
for each $j\neq k$, and all the leading principal minors of $-E$
are positive. Then every trajectory $p\left(t\right)$ starting at
any initial state $p\left(0\right)>{0}$ converges to $p^{*}>{0}.$\footnote{\citeyear{goh1978global} before Theorem 1 and Theorem 2.
\citeauthor{goh1978global}'s (\citeyear{goh1978global}) Theorem 2 implicitly assumes the existence of a fixed point explicitly assumed in \citeauthor{goh1978global}'s (\citeyear{goh1978global}) Theorem 1. It follows that the fixed point is unique.}
\end{lem}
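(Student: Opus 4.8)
The plan is to prove the lemma along the classical Lyapunov route for population systems: exhibit a Volterra-type (log-barrier) strict Lyapunov function and show that its derivative is negative definite by exploiting the $M$-matrix structure that the hypotheses impose on $-E$. First I would fix positive weights $c_1,\dots,c_n$, to be pinned down later, and set
\[ V(p)=\sum_{k=1}^{n}c_k\Big(p_k-p_k^{*}-p_k^{*}\ln\tfrac{p_k}{p_k^{*}}\Big). \]
Writing $\phi(s):=s-1-\ln s$, which is strictly convex on $(0,\infty)$ with unique minimum $0$ at $s=1$, each summand equals $c_kp_k^{*}\phi(p_k/p_k^{*})\ge0$, so $V\ge0$ with equality only at $p^{*}$. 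Moreover $V(p)\to\infty$ as any $p_k\to0^{+}$ or as $\|p\|\to\infty$, so $V$ is proper on the open orthant $\mathbb{R}^{n}_{+}$ and its sublevel sets are compact subsets of $\mathbb{R}^{n}_{+}$. I would also note that each face $\{p_k=0\}$ is invariant, since there $\dot p_k=p_kF_k=0$, so the flow stays in $\mathbb{R}^{n}_{+}$.

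Next I would differentiate along solutions. Using $\dot p_k=p_kF_k(p)$,
\[ \dot V=\sum_{k}c_k\Big(1-\tfrac{p_k^{*}}{p_k}\Big)\dot p_k=\sum_{k}c_k\,(p_k-p_k^{*})\,F_k(p). \]
Since $F_k(p^{*})=0$, the fundamental theorem of calculus gives $F_k(p)=\sum_{j}a_{kj}(p)(p_j-p_j^{*})$ with $a_{kj}(p)=\int_0^1\partial_{p_j}F_k\big(p^{*}+s(p-p^{*})\big)\,ds$; averaging preserves the pointwise bounds, so $a_{kk}(p)\le E_{kk}<0$ and $|a_{kj}(p)|\le E_{jk}$ for $j\ne k$. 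With $u=p-p^{*}$ this yields
\[ \dot V=\sum_{k,j}c_k\,a_{kj}(p)\,u_ku_j\;\le\;\sum_{k}c_kE_{kk}u_k^{2}+\sum_{k\ne j}c_kE_{jk}\,|u_k||u_j|=v^{\top}\tilde S\,v, \]
where $v_k=|u_k|$ and $\tilde S=\tfrac12\big(CE^{\top}+EC\big)$ with $C=\mathrm{diag}(c_1,\dots,c_n)$.

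The heart of the argument, and the step I expect to be hardest, is to choose $C$ so that $\tilde S\prec0$, which then forces $\dot V<0$ for every $p\ne p^{*}$. The index convention in the hypotheses (the bound on $\partial F_k/\partial p_j$ is $E_{jk}$) is exactly what makes the comparison matrix of the averaged Jacobian equal to $E^{\top}$, so requiring $\tilde S\prec0$ is the same as requiring $(-E)^{\top}$ to be \emph{diagonally stable}, i.e.\ $C(-E)^{\top}+(-E)C\succ0$ for some positive diagonal $C$. Now the sign pattern $E_{jk}\ge0$ for $j\ne k$ and $E_{kk}<0$ makes $-E$ a $Z$-matrix, and the hypothesis that all leading principal minors of $-E$ are positive makes $-E$ a nonsingular $M$-matrix. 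Since $M$-matrices are closed under transposition and every nonsingular $M$-matrix is diagonally stable (being an $H$-matrix with positive diagonal), $(-E)^{\top}$ is diagonally stable, and the positive diagonal scaling it supplies is precisely the weight vector $c$. This is the one genuinely nontrivial ingredient: translating the algebraic minor condition into the existence of a single diagonal Lyapunov scaling that works uniformly over the whole comparison family $\{a_{kk}\le E_{kk},\,|a_{kj}|\le E_{jk}\}$.

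Finally I would assemble the global conclusion. With $\dot V\le0$ everywhere and $\dot V<0$ off $p^{*}$, properness of $V$ confines the trajectory from any $p(0)\in\mathbb{R}^{n}_{+}$ to the compact sublevel set $\{V\le V(p(0))\}\subset\mathbb{R}^{n}_{+}$; the trajectory is therefore defined for all $t\ge0$, stays bounded away from the boundary, and by the standard Lyapunov global-asymptotic-stability theorem converges to the unique point where $\dot V=0$, namely $p^{*}$. Uniqueness of the interior fixed point is then immediate, since any second interior equilibrium would be a constant trajectory forced to converge to $p^{*}$. Besides the diagonal-stability step flagged above, the only points needing care are the boundary and properness control (to upgrade local to global convergence and to keep the logarithms well defined) and the transposed index bookkeeping between the Jacobian bounds and $E$.
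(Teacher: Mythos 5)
The paper does not actually prove this lemma: it is imported verbatim from Goh (1978), so there is no in-paper argument to compare against. Your proposal is, in substance, a correct reconstruction of Goh's own proof. The Lyapunov function $V(p)=\sum_k c_k\left(p_k-p_k^{*}-p_k^{*}\ln(p_k/p_k^{*})\right)$ is exactly the Volterra--Goh function used there, the computation $\dot V=\sum_k c_k(p_k-p_k^{*})F_k(p)$ is right, and the integral-form mean value theorem legitimately transfers the pointwise Jacobian bounds to the averaged coefficients $a_{kj}(p)$ because the segment from $p^{*}$ to $p$ stays in the convex open orthant. Your index bookkeeping is also correct: the hypothesis bounds $\partial F_k/\partial p_j$ by $E_{jk}$, so the comparison matrix of the averaged Jacobian is $E^{\top}$ and the required condition is diagonal stability of $(-E)^{\top}$. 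The crux you flag --- that a $Z$-matrix with positive leading principal minors is a nonsingular $M$-matrix, that this class is closed under transposition, and that every nonsingular $M$-matrix admits a positive diagonal $C$ with $C(-E)^{\top}+(-E)C\succ0$ --- is a genuine classical theorem (Berman--Plemmons/Araki), and it is precisely the ingredient Goh invokes; citing it is appropriate rather than a gap. Two minor points worth tightening: the lemma states only that $F_k$ is continuous, but the hypotheses refer to its partial derivatives, so you should make the implicit $C^1$ assumption explicit before applying the fundamental theorem of calculus; and the remark about invariance of the faces $\{p_k=0\}$ is unnecessary (and slightly off, since $F_k$ is only defined on the open orthant) --- properness of $V$ together with $\dot V\le0$ already keeps the trajectory in a compact subset of $\boldsymbol{R}_{+}^{n}$ and guarantees global existence, as you note.
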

For each time $t$, let $w_{k}\left(t\right)$ be the mass of agents
with heritable birth rate $x_{k}$ at time $t$ (henceforth, $x_{k}$-agents).
Let $p_{k}\left(t\right)=\frac{w_{k}\left(t\right)}{w\left(t\right)}$
be the share of $x_{k}$-agents at time $t$\emph{.}{} Let $\bar{\boldsymbol{b}}\left(t\right)=\sum_{k}p_{k}\left(t\right)\cdot x_{k}+\mu_{y}+\boldsymbol{z}\left(t\right)$ be the average birth rate at time $t$.
Let $\boldsymbol{b}_{k}\left(t\right)$ be the average birth rate
of $x_{k}$-agents in time $t$: $\boldsymbol{b}_{k}\left(t\right)=x_{k}+\mu_{y}+\boldsymbol{z}\left(t\right)$.
The mass of $x_{k}$-agents at time $t+dt$ is given by (neglecting
terms of $O\left(\left(dt\right)^{2}\right)$):
\[
w_{k}\left(t+dt\right)=w_{k}\left(t\right)+dt\cdot\left(\left(\boldsymbol{b}_{k}\left(t\right)-\delta-\lambda\right)\cdot w_{k}\left(t\right)+w\left(t\right)\cdot\lambda_{x}\cdot q_{k}\right)\text{,}
\] 
Hence $$\frac{dw_k(t)}{dt}=\left(\boldsymbol{b}_{k}\left(t\right)-\delta-\lambda_x\right)\cdot w_{k}\left(t\right)+w\left(t\right)\cdot\lambda_{x}\cdot q_{k}\text{.}
$$

The mass of agents at time $t+dt$ is given by
\[
w\left(t+dt\right)=w\left(t\right)+dt\cdot\left(\bar{\boldsymbol{b}}\left(t\right)-\delta\right)\cdot w\left(t\right),
\]

so that $$\frac{dw}{dt}=\left(\bar{\boldsymbol{b}}\left(t\right)-\delta \right)\cdot w\left(t\right).$$

Since 
\[\frac{1}{p_k(t)}\frac{dp_k(t)}{dt}=\frac{1}{w_k(t)}\frac{dw_k(t)}{dt} -\frac{1}{w(t)}\frac{dw(t)}{dt},\]
it follows that 

$$
\frac{dp_{k}(t)}{dt}=\left(\boldsymbol{b}_{k}\left(t\right)-\bar{\boldsymbol{b}}\left(t\right)-\lambda_{x}\right)\cdot p_{k}\left(t\right)+\lambda_{x}\cdot q_{k}.
$$
Substituting $\bar{{x}}\left(t\right)\equiv\sum_{k}p_{k}\left(t\right)\cdot x_{k}$,
we obtain
\begin{equation}
\frac{dp_{k}\left(t\right)}{dt}=\left(\left(x_{k}-\bar{{x}}\left(t\right)\right)-\lambda_{x}\right)\cdot p_{k}\left(t\right)+\lambda_{x}\cdot q_{k}=p_{k}\left(t\right)\cdot F_{k}\left(p\right).\label{eq:pk-dt0}
\end{equation}

Let $\Delta_{+}^{n}\subseteq\boldsymbol{R}_{+}^{n}$ be the interior of the simplex,
$\Delta_{+}^{n}=\left\{ p\in\boldsymbol{R}_{+}^{n}|\sum_{k}p_{k}=1\right\} $.
Since $\frac{d}{dt}\sum_1^np_k(t)=0,$ it follows that $p\left(0\right)\in\Delta_{+}^{n}$ implies that $p\left(t\right)\in\Delta_{+}^{n}$
for each $t$.

We now show that there exists a fixed point $p^{*}\in\Delta_{+}^{n}$. If $\frac{dp_{k}\left(t\right)}{dt}=0$
and $p_{k}\left(t\right)=p_{k}^{*}$ in Eq. (\ref{eq:pk-dt0}), then setting $x^{*}\equiv\sum_{k}p_{k}^{*}\cdot x_{k}$ yields the requirement:
\begin{equation}
p_{k}^{*}=\frac{\lambda_{x}\cdot q_{k}}{\lambda_{x}+x^{*}-x_{k}},\label{eq:p*}
\end{equation}
where it will be shown that the denominator is positive, for all $k=1,...,n$. Next multiply each $p_k^{*}$ in (\ref{eq:p*}) by $x_{k}$ and
sum to obtain an equation in one unknown:

\begin{equation}
x^{*}=\sum_{k}\frac{x_{k}\cdot\lambda_{x}\cdot q_{k}}{\lambda_{x}+x^{*}-x_{k}}.\label{eq:x*}
\end{equation}
In the range $x^{*}>x_{n}-\lambda_{x}$
the LHS (resp., RHS) is increasing (resp., decreasing) in $x^{*}$. Further, LHS<RHS if $x^{*}-(x_{n}-\lambda_{x})>0$ but small enough and LHS>RHS if $x^{*}$ is large enough. These observations imply that there
exists a unique solution $x^{*}>x_{n}-\lambda_{x}$ to Eq. (\ref{eq:x*}).
This implies, using Eq (\ref{eq:p*}), that $p_{k}^{*}>0,k=1,...,n$ and that $\sum_{k=1}^np_{k}^{*}=1$ so that $p^{*}\in\Delta_{+}^{n}$.

We now prove {global} asymptotic convergence to this $p^{*}$ from any initial
state. We have
\begin{equation}
\frac{dp_{k}\left(t\right)}{dt}=p_{k}\left(t\right)\cdot F_{k}\left({p}\right),\,\,\,\,\,\,\,\,\,\,\,\,\,\,\textrm{where}\,\,\,\,F_{k}\left({p}\right)=\left(x_{k}-\bar{{x}}\left(t\right)\right)-\lambda_{x}+\frac{\lambda\cdot q_{k}}{p_{k}}.\label{eq:pk-dt}
\end{equation}
Taking the partial derivative of $F_{k}\left({p}\right)$ we obtain, for $j\neq k$:
\[
\left|\frac{\partial F_{k}\left(p\right)}{\partial p_{j}}\right|=-x_{j}<0,\,\,\,\,\,\,\textrm{and}
\]
\[
\frac{\partial F_{k}\left(p\right)}{\partial p_{k}}=-\left(x_{k}+\lambda_x\frac{q_{k}}{\left(p_{k}\right)^{2}}\right)<-x_{k}<-x_1<0.
\]

Let the matrix $E$ be equal to $-x_{1}$ on the main diagonal, and
equal to zero otherwise. Then all the conditions of Lemma \ref{lem:convergence-lemma}
are satisfied, which implies that $p\left(t\right)$ converges to $p^{*}\in\Delta_{+}^{n}$ from any initial state $p\left(0\right)\in\Delta_{+}^{n}$.
The fact that $p\left(t\right)$ converges to $p^{*}$ implies that
\[
\lim_{t\rightarrow\infty}\left|\bar{\boldsymbol{b}}\left(t\right)-x^{*}-\mu_{y}-\boldsymbol{z}\left(t\right)\right|=0.
\]

This, in turn, implies that the equivalent growth rate is given by:
\[
g=\lim_{t\rightarrow\infty}\frac{\log{w}\left(t\right)}{t}=f\left(X,q,\lambda_{x}\right)+\mu_{y}+\mu_{z}-\delta,
\]
\[
\textrm{where\,\,\,\,\,\,\,\,\,\,\,\,\,\,\,\,\,\,\,\,\,\,\,\,\,\,\,\,\,\,\,\,\,\,\,\,\,\,\,\,\,\,\,\,}\,\,\,\,\,\,f\left(X,q,\lambda_{x}\right)\equiv x^{*}\in\left(\max\left(\mu_{x},x_{n}-\lambda_{x}\right),x_{n}\right).\,\,\,\,\,\,\,\,\,\,\,\,\,\,\,\,\,\,\,\,\,\,\,\,\,\,\,\,\,\,\,\,\,\,\,\,\,\,\,\,\,\,\,\,\,\,\,\,\,\,\,\,\,\,\,\,\,\,\,\,\,\,\,\,\,\,\,\,\,\,
\]

We prove the final claim as follows. Let $g\left(x_{k},x^{*}\right)$
be defined as:
\[
g\left(x_{k},x^{*}\right)\equiv\frac{x_{k}\cdot\lambda_x}{\lambda_x+x^{*}-x_{k}}-x^{*}.
\]
Observe that $g\left(x_{k},x^{*}\right)$ is a strictly decreasing
function of $x^{*}$ (in the domain $x^{*}>x_{n}-\lambda_x).$ Next
we show that $g\left(x_{k},x^{*}\right)$ is strictly convex in $x_{k}$:

\[
\frac{\partial g\left(x_{k},x^{*}\right)}{\partial x_{k}}=\frac{\lambda_x\cdot\left(\lambda_x+x^{*}\right)}{\left(\lambda_x+x^{*}-x_{k}\right)^{2}}\,\Rightarrow\,\frac{\partial^{2}g\left(x_{k},x^{*}\right)}{\partial\left(x_{k}\right)^{2}}=\frac{2\cdot\lambda_x\cdot\left(\lambda_x+x^{*}\right)}{\left(\lambda_x+x^{*}-x_{k}\right)^{3}}>0.
\]
\[
.
\]

Eq. (\ref{eq:x*}) is equivalent to $\boldsymbol{E}_{\boldsymbol{x}}\left[g\left(\boldsymbol{x},x^{*}\right)\right]=0$.
The convexity of $g\left(x_{k},x^{*}\right)$ implies that $\boldsymbol{E}_{\boldsymbol{x}}\left[g\left(\boldsymbol{x},x^{*}\right)\right]$
increases following a mean preserving spread from $\boldsymbol{x}=\left(X,q_{x}\right)$
to $\boldsymbol{x}'=\left(X',q_{x'}\right)$, which, in turn, implies
that the unique solution $x^{*}$ to $\boldsymbol{E}_{\boldsymbol{x}}\left[g\left(\boldsymbol{x},x^{*}\right)\right]=0$
must strictly increase as well, since $g\left(x_{k},x^{*}\right)$
is strictly decreasing in $x^{*}$. In particular, this implies that
$x^{*}>\mu_{x}.$

\section{\label{subsec:Proof-of-Lemma}Proof of Proposition \ref{smallbeta}} 
The growth rate derived from the mean $\bar{c}=\sum_{c\in C}q(c)c$ is $\bar{c}^\beta$. The growth rate derived from the lottery $(C,q)$ is the unique 
solution for $x^{*}>m^{\beta}-\lambda_x$ of \begin{equation} \label{xstar}
x^{*}=\lambda_{x}\sum_{c\in C}\frac{q(c)\cdot c^\beta}{\lambda_{x}+x^{*}-c^\beta}.
\end{equation}

If $\beta>0$ is sufficiently small, it follows that $x^{*}<\bar{c}^{\beta}$ if \begin{equation*}
R\equiv\bar{c}^{\beta}>\lambda_{x}\sum_{c\in C}\frac{q(c) \cdot c^{\beta}}{\lambda_{x}+\bar{c}^{\beta}-c^\beta}\equiv S.
\end{equation*} This is because the LHS of Eq. (\ref{xstar}) is increasing in $x^*$ and the RHS of Eq. (\ref{xstar}) is decreasing in $x^*$ 
for $x^*>m^\beta - \lambda_x.$ In addition, $\bar{c}^\beta>m^\beta - \lambda_x$, for sufficiently small $\beta>0$. 

We have $R=S=1$ at $\beta =0$. In addition,
\begin{equation}
\label{eq:dlhs}
\left.\frac{dS}{d\beta}\right|_{\beta=0}=\left(1+\frac{1}{\lambda_{x}}\right)\sum_{c\in C}q(c)\ln c-\frac{1}{\lambda_{x}}\ln\bar{c}<\ln(\bar{c})=\left.\frac{dR}{d\beta}\right|_{\beta=0} 
\end{equation}
Hence $R>S$ for all small enough $\beta >0$
due to the concavity of the function $\ln(x)$.

\section{Explicit Solution for Binary Lotteries\label{sec:Explicit-Solution-for}}

Theorem \ref{thm:main-result} has derived the key properties of the
growth rate, without calculating an explicit formula for $f\left(X,q_{x},\lambda_{x}\right)$.
In what follows we present such an explicit formula in the case of
binary lotteries over the heritable birth rate, which is used to yield
the theoretical predictions in Section \ref{sec:Numeric-Analysis-of}
and in Figure \ref{fig:illus}.
Specifically, we now assume that the heritable birth rate has two possible
realizations, i.e., $X=\left\{ x_{l},x_{h}\right\} $. Let $\mu_x$
denote the lottery's\emph{ }expectation, let $\Delta x=x_{h}-x_{l}$
denotes the lottery's \emph{spread, }and let $q\equiv q_x\left(x_{h}\right)$
denote the probability of the higher realization.
\begin{claim}
\label{cla:binary}The equivalent growth rate of a growth process
with a binary heritable birth rate is equal to $g=f\left(\Delta x,\mu_x,q,\lambda_{x}\right)+\mu_{y}+\mu_{z}-\delta,$ where
\begin{equation}
f\left(\Delta x,\mu_{x},q,\lambda_{x}\right)=\mu_{x}+\frac{\Delta x\cdot\left(1-2\cdot q\right)-\lambda_{x}+\sqrt{\left(\Delta x-\lambda_{x}\right)^{2}+4\cdot q\cdot\Delta x\cdot\lambda_{x}}}{2}.\label{eq:f-explicit}
\end{equation}
Moreover, $f\left(\Delta x,\mu_{x},q,\lambda_{x}\right)$ is decreasing
in $\lambda_{x}$.
\end{claim}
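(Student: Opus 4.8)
The plan is to specialize the steady-state characterization from Theorem~\ref{thm:main-result} to the two-point support $X=\{x_l,x_h\}$ and solve in closed form. Rather than clearing denominators in Eq.~(\ref{eq:x*}) directly (which yields a cubic with extraneous roots), I would start from the normalization $\sum_k p_k^{*}=1$ that the steady-state distribution satisfies, as established in the proof of Theorem~\ref{thm:main-result}. Combined with Eq.~(\ref{eq:p*}), for $n=2$ this reads
\[
\frac{\lambda_x(1-q)}{\lambda_x+x^{*}-x_l}+\frac{\lambda_x q}{\lambda_x+x^{*}-x_h}=1,
\]
and clearing the two denominators produces a genuine quadratic in $x^{*}$. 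The substitution $u=\lambda_x+x^{*}$ keeps the algebra transparent: writing the factors as $u-x_l$ and $u-x_h$, the equation becomes $(u-x_l)(u-x_h)=\lambda_x\bigl(u-((1-q)x_h+qx_l)\bigr)$, i.e.\ $u^{2}-(x_l+x_h+\lambda_x)u+(x_lx_h+\lambda_x\nu)=0$ with $\nu\equiv(1-q)x_h+qx_l$.

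Second, I would apply the quadratic formula and re-express everything through $\mu_x,\Delta x,q$ via $x_h=\mu_x+(1-q)\Delta x$ and $x_l=\mu_x-q\Delta x$, so that $x_l+x_h=2\mu_x+(1-2q)\Delta x$. A short computation should collapse the discriminant to $(\Delta x-\lambda_x)^{2}+4q\,\Delta x\,\lambda_x$, whereupon $x^{*}=u-\lambda_x$ gives precisely the two candidate values $\mu_x+\tfrac12\bigl[(1-2q)\Delta x-\lambda_x\pm\sqrt{(\Delta x-\lambda_x)^2+4q\,\Delta x\,\lambda_x}\bigr]$. Selecting the $+$ root is then forced by Theorem~\ref{thm:main-result}, which guarantees that the relevant $x^{*}$ is the unique solution in $(\max(\mu_x,x_h-\lambda_x),x_h)$; concretely, the identity
\[
D-\bigl(\lambda_x+(2q-1)\Delta x\bigr)^{2}=4q(1-q)\,\Delta x^{2}>0,\qquad D\equiv(\Delta x-\lambda_x)^{2}+4q\,\Delta x\,\lambda_x,
\]
shows $\sqrt{D}>\lambda_x+(2q-1)\Delta x$, so the $+$ root lies strictly above $\mu_x$ while the $-$ root lies strictly below it, pinning down Eq.~(\ref{eq:f-explicit}).

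Finally, for monotonicity I would differentiate the closed form directly. Since $\partial D/\partial\lambda_x=2\bigl(\lambda_x+(2q-1)\Delta x\bigr)$, one obtains $\partial f/\partial\lambda_x=\tfrac12\bigl(-1+(\lambda_x+(2q-1)\Delta x)/\sqrt{D}\bigr)$, and the same inequality $\sqrt{D}>\lambda_x+(2q-1)\Delta x$ used for root selection immediately yields $\partial f/\partial\lambda_x<0$. I expect no serious obstacle: the content is essentially bookkeeping, and the only genuine care is (i) using the normalization form of the fixed-point equation to avoid an extraneous cubic root and (ii) correctly identifying the admissible root. Both hinge on the single inequality $D>(\lambda_x+(2q-1)\Delta x)^{2}$, which conveniently drives both the root selection and the monotonicity claim.
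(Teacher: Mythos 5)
Your proof is correct and follows essentially the same route as the paper: specialize the steady-state characterization of Theorem~\ref{thm:main-result} to $n=2$, solve the resulting quadratic, select the admissible root, and differentiate to get monotonicity in $\lambda_x$, with the same key inequality $\sqrt{D}>\lambda_x+(2q-1)\Delta x$ (equivalently $q(1-q)>0$) doing the work in both places. The only cosmetic difference is that the paper solves the quadratic for the steady-state share $p$ of high types and then sets $f=\mu_x+(p-q)\Delta x$, whereas you solve the affinely equivalent quadratic for $x^{*}$ directly via the normalization $\sum_k p_k^{*}=1$.
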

\begin{proof}[Proof]
Substituting $p_{h}=p$, $p_{l}=1-p$, $q_{h}=q$ and $q_{l}=1-q$
in Eq. (\ref{eq:p-main}) yields:

\[
p=\frac{\lambda_x\cdot q}{\lambda_x-(1-p)\cdot\Delta x}\,\,\Leftrightarrow\,\,p^{2}\cdot\Delta x+p\cdot\left(\lambda_{x}-\Delta x\right)-q\cdot\lambda_{x}=0.
\]
This quadratic equation has a unique solution in $\left(0,1\right)$:
\begin{equation}
p(\Delta x,q,\lambda_{x})=\frac{\Delta x-\lambda_{x}+\sqrt{\left(\Delta x-\lambda_{x}\right)^{2}+4\cdot q\cdot\Delta x\cdot\lambda_{x}}}{2\cdot\Delta x},\label{eq:p*-binary}
\end{equation}
which yields (\ref{eq:f-explicit}), when substituting this solution into
\[
f\left(\Delta x,\mu_{x},q,\lambda_{x}\right)=p(\Delta x,q,\lambda_{x})\cdot x_{h}+\left(1-p(\Delta x,q,\lambda_{x})\right)\cdot x_{\ell}=\mu_{x}+\left(p(\Delta x,q,\lambda_{x})-q\right)\cdot\Delta x.
\]
Next we prove that $f\left(\Delta x,\mu_{x},q,\lambda_{x}\right)$
is decreasing in $\lambda_{x}$. Take the derivative of $p(\Delta x,q,\lambda_{x})$:
\[
\frac{\partial p(\Delta x,q,\lambda_{x})}{\partial\lambda_{x}}=\frac{1}{2\cdot\Delta x}\left(\frac{-2\cdot\left(\Delta x-\lambda_{x}\right)+4\cdot q\cdot\Delta x}{2\cdot\sqrt{\left(\Delta x-\lambda_{x}\right)^{2}+4\cdot q\cdot\Delta x\cdot\lambda_{x}}}-1\right).
\]
We have to show that $\frac{\partial p(\Delta x,q,\lambda_{x})}{\partial\lambda_{x}}$
is negative for any $\lambda_{x}>0$, which is true iff
\[
\sqrt{\left(\Delta x-\lambda_{x}\right)^{2}+4\cdot q\cdot\Delta x\cdot\lambda_{x}}\overset{}{>}\Delta x\cdot\left(2\cdot q-1\right)+\lambda_{x}
\]
After some algebra, this condition holds if and only if 
$q(1-q)>0$ which is true for all $q\in\left(0,1\right)$.
\end{proof}
\bibliographystyle{chicago}
\bibliography{risk}

\begin{thebibliography}{}

\bibitem[\protect\citeauthoryear{Burkey}{Burkey}{1999}]{burkey1999extinction}
Burkey, T.~V. (1999).
\newblock Extinction in fragmented habitats predicted from stochastic
  birth--death processes with density dependence.
\newblock {\em Journal of Theoretical Biology\/}~{\em 199\/}(4), 395--406.

\bibitem[\protect\citeauthoryear{Duffie and Sun}{Duffie and
  Sun}{2012}]{duffie2012exact}
Duffie, D. and Y.~Sun (2012).
\newblock The exact law of large numbers for independent random matching.
\newblock {\em Journal of Economic Theory\/}~{\em 147\/}(3), 1105--1139.

\bibitem[\protect\citeauthoryear{Garrett and Sobel}{Garrett and
  Sobel}{1999}]{garrett1999gamblers}
Garrett, T.~A. and R.~S. Sobel (1999).
\newblock Gamblers favor skewness, not risk: Further evidence from united
  states lottery games.
\newblock {\em Economics Letters\/}~{\em 63\/}(1), 85--90.

\bibitem[\protect\citeauthoryear{Goh}{Goh}{1978}]{goh1978global}
Goh, B. (1978).
\newblock Global convergence of some differential equation algorithms for
  solving equations involving positive variables.
\newblock {\em BIT Numerical Mathematics\/}~{\em 18\/}(1), 84--90.

\bibitem[\protect\citeauthoryear{Golec and Tamarkin}{Golec and
  Tamarkin}{1998}]{golec1998bettors}
Golec, J. and M.~Tamarkin (1998).
\newblock Bettors love skewness, not risk, at the horse track.
\newblock {\em Journal of political economy\/}~{\em 106\/}(1), 205--225.

\bibitem[\protect\citeauthoryear{Heller}{Heller}{2014}]{heller2014overconfidence}
Heller, Y. (2014).
\newblock Overconfidence and diversification.
\newblock {\em American Economic Journal: Microeconomics\/}~{\em 6\/}(1),
  134--153.

\bibitem[\protect\citeauthoryear{Lehmann and Balloux}{Lehmann and
  Balloux}{2007}]{lehmann2007natural}
Lehmann, L. and F.~Balloux (2007).
\newblock Natural selection on fecundity variance in subdivided populations:
  Kin selection meets bet hedging.
\newblock {\em Genetics\/}~{\em 176\/}(1), 361--377.

\bibitem[\protect\citeauthoryear{Lewontin and Cohen}{Lewontin and
  Cohen}{1969}]{lewontin1969population}
Lewontin, R.~C. and D.~Cohen (1969).
\newblock On population growth in a randomly varying environment.
\newblock {\em Proceedings of the National Academy of Sciences\/}~{\em
  62\/}(4), 1056--1060.

\bibitem[\protect\citeauthoryear{McNamara}{McNamara}{1995}]{mcnamara1995implicit}
McNamara, J.~M. (1995).
\newblock Implicit frequency dependence and kin selection in fluctuating
  environments.
\newblock {\em Evolutionary Ecology\/}~{\em 9\/}(2), 185--203.

\bibitem[\protect\citeauthoryear{McNamara and Dall}{McNamara and
  Dall}{2011}]{mcnamara2011evolution}
McNamara, J.~M. and S.~R.~X. Dall (2011).
\newblock The evolution of unconditional strategies via the multiplier effect.
\newblock {\em Ecology Letters\/}~{\em 14\/}(3), 237--243.

\bibitem[\protect\citeauthoryear{Robatto and Szentes}{Robatto and
  Szentes}{2017}]{robatto2017biological}
Robatto, R. and B.~Szentes (2017).
\newblock On the biological foundation of risk preferences.
\newblock {\em Journal of Economic Theory\/}~{\em 172}, 410--422.

\bibitem[\protect\citeauthoryear{Robson}{Robson}{1996}]{robson1996biological}
Robson, A.~J. (1996).
\newblock A biological basis for expected and non-expected utility.
\newblock {\em Journal of Economic Theory\/}~{\em 68\/}(2), 397--424.

\bibitem[\protect\citeauthoryear{Robson and Samuelson}{Robson and
  Samuelson}{2009}]{robson2009evolution}
Robson, A.~J. and L.~Samuelson (2009).
\newblock The evolution of time preference with aggregate uncertainty.
\newblock {\em American Economic Review\/}~{\em 99\/}(5), 1925--1953.

\bibitem[\protect\citeauthoryear{Robson and Samuelson}{Robson and
  Samuelson}{2011}]{robson2011evolutionary}
Robson, A.~J. and L.~Samuelson (2011).
\newblock The evolutionary foundations of preferences.
\newblock In J.~Benhabib, A.~Bisin, and M.~Jackson (Eds.), {\em The Handbook of
  Social Economics}, Volume~1, pp.\  221--310. Elsevier.

\bibitem[\protect\citeauthoryear{Robson and Samuelson}{Robson and
  Samuelson}{2019}]{Robson-Samuelson-2019}
Robson, A.~J. and L.~Samuelson (2019).
\newblock Evolved attitudes to idiosyncratic and aggregate risk in
  age-structured populations.
\newblock {\em Journal of Eocnomic Theory\/}~{\em 181}, 44--81.

\bibitem[\protect\citeauthoryear{Smith and Hellmann}{Smith and
  Hellmann}{2002}]{smith2002population}
Smith, J. N.~M. and J.~J. Hellmann (2002).
\newblock Population persistence in fragmented landscapes.
\newblock {\em Trends in Ecology \& Evolution\/}~{\em 17\/}(9), 397--399.

\end{thebibliography}

\end{document}